\newcommand{\ket}[1]{\vert#1\rangle}
\newcommand{\ketbra}[2]{\vert #1 \rangle \langle #2 \vert}
\newcommand{\avg}[1]{\langle#1\rangle}
\newcommand{\comm}[2]{[#1, #2]}
\newcommand{\bcomm}[2]{\bigl[#1, #2\bigr]}
\newcommand{\id}{\openone}
\DeclareMathOperator{\conv}{Conv}
\newcommand{\bigro}[1]{\bigl(#1\bigr)}
\newcommand{\Bigro}[1]{\Bigl(#1\Bigr)}
\newcommand*{\Rnum}[1]{\expandafter\@slowromancap\romannumeral #1@}
\newcommand{\ba}{\begin{eqnarray}}
\newcommand{\be}{\begin{equation}}
\newcommand{\ee}{\end{equation}}
\newcommand{\beq}{\begin{equation}}
\newcommand{\eeq}{ \end{equation}}
\newcommand{\bea}{\begin{eqnarray}}
\newcommand{\eea}{ \end{eqnarray}}
\newcommand{\mean}[1]{\left\langle#1\right\rangle}
\DeclarePairedDelimiter{\abs}{\lvert}{\rvert}
\newcommand{\babs}[1]{\abs[big]{#1}}
\newcommand{\ea}{\end{eqnarray}}
\newcommand{\ban}{\begin{eqnarray*}}
\newcommand{\ean}{\end{eqnarray*}}
\newcommand{\tr}{\text{\normalfont tr}}
\newcommand{\etal}{\textit{et al}. }
\newcommand{\ie}{\textit{i.e.}}
\newcommand{\eg}{\textit{e.g.}}
\newcommand{\red}[1]{\textcolor{red}{#1}}
\newcommand{\blue}[1]{\textcolor{blue}{#1}}
\newtheorem{definition}{Definition}
\newtheorem{theorem}{Theorem}
\begin{document}

\title{Device-independent tests of structures of measurement incompatibility}

	
\author{Marco T\'ulio Quintino}
\affiliation{Department of Physics, Graduate School of Science, The University of Tokyo, Hongo 7-3-1, Bunkyo-ku, Tokyo 113-0033, Japan}

\author{Costantino Budroni}
\affiliation{Institute for Quantum Optics and Quantum Information (IQOQI),
Austrian Academy of Sciences, Boltzmanngasse 3, 1090 Vienna, Austria}
\affiliation{Faculty of Physics, University of Vienna, Boltzmanngasse 5, 1090 Vienna, Austria}

\author{Erik Woodhead}
\affiliation{ICFO-Institut de Ci\`encies Fotoniques, The Barcelona Institute of Science and Technology, 08860 Castelldefels (Barcelona), Spain}

\author{Ad\'an Cabello}
\affiliation{Departamento de F\'{\i}sica Aplicada II, Universidad de Sevilla, E-41012 Sevilla, Spain}
\affiliation{Instituto Carlos~I de F\'{\i}sica Te\'orica y Computacional, Universidad de Sevilla, E-41012 Sevilla, Spain}

\author{Daniel Cavalcanti}
\affiliation{ICFO-Institut de Ci\`encies Fotoniques, The Barcelona Institute of Science and Technology, 08860 Castelldefels (Barcelona), Spain}

\date{\today}


\begin{abstract}
In contrast with classical physics, in quantum physics some sets of measurements are incompatible in the sense that they can not be performed simultaneously. Among other applications, incompatibility allows for contextuality and Bell nonlocality. This makes {it} of crucial importance {to develop} tools for certifying whether a set of measurements respects a certain structure of incompatibility. Here we show that, for quantum or nonsignaling models, if the measurements employed in a Bell test satisfy a given type of compatibility, then the amount of violation of some specific Bell inequalities {becomes} limited. Then, we show that correlations arising from local measurements on two-qubit states violate these limits, which rules out in a device-independent way such structures of incompatibility. In particular, we prove that quantum correlations allow for a device-independent demonstration of genuine triplewise incompatibility. Finally, we translate these results into a semi-device-independent Einstein-Podolsky-Rosen-steering scenario. 
\end{abstract}


\maketitle

The fact that some pairs of quantum observables do not commute implies that they can not be measured simultaneously as the corresponding operators do not share a common set of eigenvectors \cite{krausbook}. This incompatibility property of quantum measurements is used in several quantum information protocols such as quantum cryptography \cite{cripto_review} and quantum state discrimination \cite{CHT18,UKSYG18,SSC19}, and is also required in proofs of contextuality \cite{amaral2018graph,XC18}, Einstein-Podolsky-Rosen steering (EPR-steering) \cite{quintino14,uola14}, and Bell nonlocality \cite{NL_review}. 

It is thus of fundamental and practical importance to develop tools to experimentally certify that a set of measurements respects a given type of incompatibility, required for producing a specific type of quantum {correlation}. Moreover, it would be very useful to be able to achieve such a certification without needing to model the experimental procedures that generate the experimental statistics. This is precisely the aim of the paradigm of device-independent certification used, for instance, for certifying secure communication \cite{acin07} and randomness \cite{randomness_review}. This paradigm assumes that {quantum theory (QT)} is correct and that {signaling} between {spacelike} separated events is impossible. Then, it uses the violation of specifically tailored Bell inequalities \cite{bell64} to certify a targeted property using only the experimental statistics.

The relation between Bell inequality violation and measurement incompatibility was first studied by Fine, who showed that, in the scenario where two parties are restricted to dichotomic measurements, a Bell inequality can only be violated if the observers use incompatible measurements \cite{fine82}. Later, Wolf \etal \cite{wolf09} showed that every pair of incompatible measurement can be used to violate the simplest Bell inequality, namely the Clauser-Horne-Shimony-Holt (CHSH) inequality \cite{chsh69}. Moreover, methods for device-independent quantification of incompatibility have been proposed \cite{cavalcantiPRA16,chenPRL16,chenPRA18} and it is known that some {sets} of incompatible measurements can not be used to violate Bell inequalities \cite{quintino15b,HQB18,bene17}. Finally, it is known that when more than two measurements are considered, different compatibility structures may appear \cite{teiko08,liang11}. 


\begin{figure}
	\includegraphics[scale=.43]{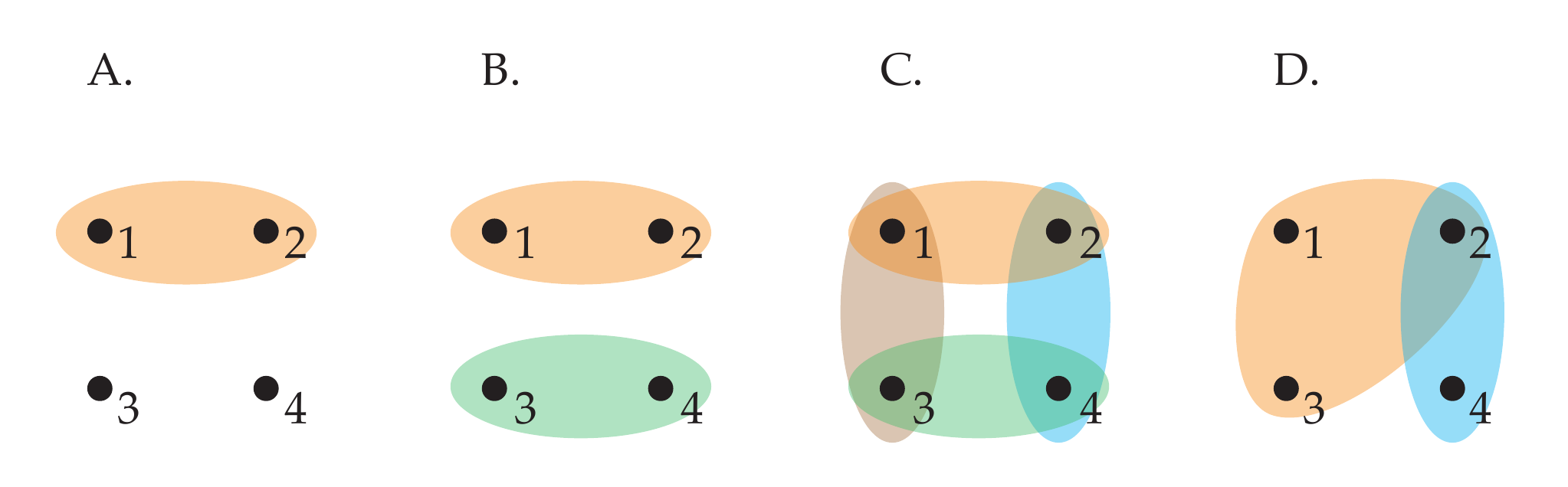} 
	\caption{Examples of incompatibility structures for four measurements. Each node represents a complete measurement (\ie{} a complete set of POVM elements), and each hyper-edge (represented by a region colored with the same color) contains measurements that are compatible. If a set of measurements are not contained in a hyper-edge they are incompatible. The respective incompatibility structures are represented by the following {hypergraph}s: A.: $\mathcal{C}_{\text{A}}=[(1,2)]$; B.: $\mathcal{C}_{\text{B}}=[(1,2),(3,4)]$; C.: $\mathcal{C}_{\text{C}}=[(1,2),(1,3),(2,4),(3,4)]$; D.: $\mathcal{C}_{\text{D}}=[(1,2,3),(2,4)]$.}\label{fig:compatibility}
\end{figure}


In this Letter, we show how to test if a specific structure of incompatibility {is required to} generate the statistics observed in a Bell test. Our approach is based on the intuition that, if the measurements used in the Bell test satisfy a targeted structure of {compatibility}, then the amount of Bell violation becomes limited and, therefore, any violation beyond this limit rules out the presence of the targeted {compatibility} structure. We also show examples of such violations in the simplest scenario of local measurements applied to two-qubit systems. Thus, at least the simplest structures of incompatibility can be certified in a device-independent way.


{\em Pairwise and $n$-wise incompatibility.---}In QT, measurements on $d$-dimensional quantum systems are described by positive operators $M_{a|x}\geq0$ (we use $x$ to label different measurements and $a$ their outcomes) acting on a $d$-dimensional complex Hilbert space $\mathbb{C}^d$ and satisfying the normalization condition $\sum_a M_{a|x} = \openone, \quad \forall x$ ($\openone$ is the identity operator). A set of quantum measurements is compatible if and only if there exists a set of measurement operators $\{E_\lambda\}$ ($E_\lambda\geq0$ and $\sum_\lambda E_\lambda=\openone$) such that 
\ba\label{compatible M}
M_{a|x}=\sum_\lambda p(a|x,\lambda) E_\lambda,~\forall a,x,
\ea
where $p(a|x,\lambda)\geq0$ and $\sum_a p(a|x,\lambda)=1~ \forall x, \lambda$ \cite{kru}. Otherwise, they are incompatible. Notice that a set of compatible measurements can be implemented simultaneously by employing the measurement $\{E_\lambda\}$ and post-processing the results according to the probability {distribution} $\{p(a|x,\lambda)\}$. 

Given the previous definition, a set of measurements can present different structures of compatibility. For instance, a set of three measurements can be pairwise compatible but incompatible when all three measurements are considered \cite{teiko08}. In general the compatibility structure of a set of measurements can be represented by a hyper-graph $\mathcal{C}=[{C}_1,{C}_2,...,{C}_k]$, where each hyper-edge ${C}_i$ indicates a subset of measurements that are compatible. For instance, the structure $\mathcal{C}_{\text{pair}}=[\{1,2\},\{1,3\},\{2,3\}]$ indicates that the measurements $1,2$ and $3$ are pairwise compatible, but not {triplewise} compatible, while the structure $\mathcal{C}_{\text{3full}}=[\{1,2,3\}]$ indicates full {triplewise} compatibility (see Fig.~\ref{fig:compatibility} for more examples). In the Appendix we show how the different kinds of measurement incompatibility can be tested by {semidefinite} programming. 

Within this framework, we can also define genuine triplewise (or in general $n$-wise) incompatibility: A set of three measurements is genuinely triplewise incompatible when it cannot be written as {a convex combination} of measurements that are pairwise compatible on {different partitions}. Let us illustrate this concept with an example. Consider a set of three noisy qubit Pauli measurements given by measurement operators
\begin{equation} \label{WN}
	M_{a|x}^\eta:= \eta \Pi_{a|x} + (1-\eta) \frac{\openone}{2},
\end{equation}
where $x=1,2,3$ {refers} to each Pauli measurement ($X, Y, Z$), respectively, and $\Pi_{a|x}$ are their eigenprojectors. These measurements are triplewise compatible for $\eta\leq 1/\sqrt{3}\approx 0.58$ and pairwise compatible for $\eta\leq 1/\sqrt{2}\approx 0.71$ \cite{teiko08}. It turns out that, for $\eta\leq \frac{\sqrt{2}+1}{3} \approx 0.80$, the set can be written as a convex combination of other sets in which two measurements are compatible (see Fig.~\ref{fig:genuine}). Thus, although for $\eta>1/\sqrt{2}$ the measurements are triplewise incompatible, it is only for $\eta>\frac{\sqrt{2}+1}{3}$ that they are genuinely triplewise incompatible.


\begin{figure}
\includegraphics[scale=.43]{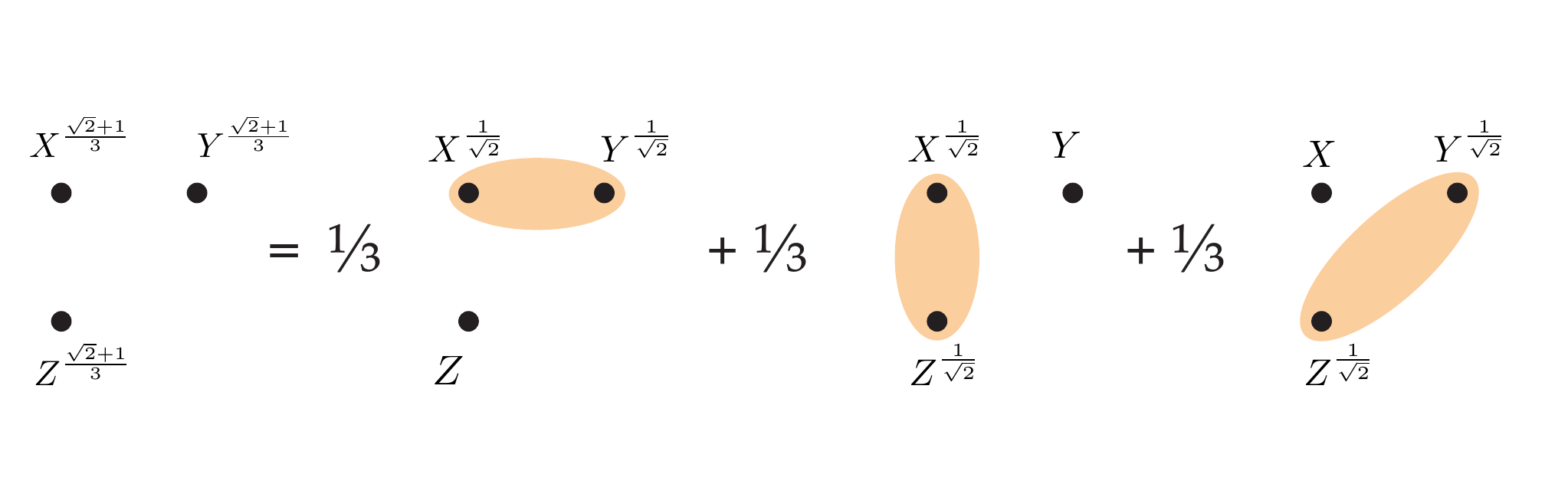}
\caption{The set of noisy Pauli measurements $X^\eta$, $Y^\eta$, $Z^\eta$ defined by  \eqref{WN} for $\eta=\frac{\sqrt{2}+1}{3}$ can be written as {a} uniform convex combination of Pauli measurements that have a compatible pair (represented in a shaded area). Thus, one can implement these measurements by randomly implementing sets of measurements which are not triplewise incompatible.}\label{fig:genuine}
\end{figure}


{\em Device-independent test of structures of incompatibility.---}We now turn to the question of certifying the different types of measurement incompatibility in a device-independent way, \ie{}, by analyzing the statistics of input and outputs relations of measurements. We consider a bipartite Bell scenario where two parties, Alice and Bob, share a bipartite state $\rho$ onto which they perform measurements labeled by $x$ and $y$ with outcomes $a$ and $b$, respectively. After many rounds of the experiment, Alice and Bob can determine the set of conditional probability distributions $\{p(ab|xy)\}$, which we call the observed \emph{behavior} \cite{Tsirelson93}. A behavior is local when it can be written as \cite{NL_review}
\begin{equation}
  \label{bell_local}
  p(ab|xy)=\sum_\lambda p(\lambda) p_A(a|x,\lambda)p_B(b|y,\lambda), \forall a,b,x,y,
\end{equation}
where $p(\lambda)$, $p_A(a|x,\lambda)$, and $p(b|y,\lambda)$ are probability distributions. We denote the set of local behaviors {by} $L$.

If one of the parties, say Alice, performs a set of measurements which are fully compatible, the observed behavior is local regardless the shared state and the measurements of Bob \cite{fine82}. 
This can be explicitly seen by using the definition \eqref{compatible M} as follows:
\ba \label{JM_NL}
p(ab|xy)&=&\tr(M_{a|x}\otimes M_{b|y} \rho)\\ \
&=&\sum_\lambda p_A(a|x,\lambda) \tr(E_\lambda\otimes M_{b|y} \rho)\nonumber\\
&=&\sum_\lambda p_A(a|x,\lambda)p_B(b,\lambda|y)\nonumber\\
&=&\sum_\lambda p(\lambda) p_A(a|x,\lambda)p_B(b|y,\lambda).\nonumber
\ea
It then follows that the observation of a nonlocal behavior (or equivalently the violation of a Bell inequality) certifies in a device-independent way that both parties used incompatible measurements. 

Similarly, in the case that Alice performs a set of measurement that satisfy a more general compatibility structure $\mathcal{C}$, the observed behavior will be local when restricted to the measurements in the hyper-edges $C_i$ of~$\mathcal{C}$. For instance, let us consider the case of three measurements {on} Alice's side for the sake of simplicity. Let $A$ represent a condition {that} the collected total {behavior} is guaranteed to satisfy, for instance, $A$ can be the {nonsignaling} condition ($NS$) or {that} it has a quantum realization in terms of local measurements on a quantum state ($Q$). For any behavior respecting the condition $A$, if Alice's measurements $x=1$ and $x=2$ are compatible, the probabilities of this behavior respect $p(ab|xy)=\sum_\lambda p(\lambda) p_A(a|x,\lambda)p_B(b|y,\lambda)$ for $ x=1,2 $ and any $y$. We {denote $L_{12}^A$ this} set of {behaviors, that respects} the condition $A$ {and is} Bell local for $x=1$ {and} $x=2$.

Notice that the observation that $\{p(ab|xy)\} \notin L_{12}^A$ allows us to conclude that the measurements~1 and 2 are incompatible. Analogously, we can define the sets $L_{23}^A$ and $L_{13}^A$ that correspond to the case where the other pairs of Alice's measurements are compatible. With these three sets representing pairwise compatibility, we can also define their convex hull $L_{2\text{conv}}^A:= \text{Conv}\left( L_{12}^A,L_{23}^A,L_{13}^A\right)$ and intersection $L_{2\cap }^A:= L_{12}^A\cap L_{23}^A\cap L_{13}^A$ (see Fig.~\ref{fig:geometrical2}). The observation that a behavior does not belong to these sets allows us to conclude:

\begin{itemize}
\item If $\{p(ab|xy)\}\notin L$, then there is some incompatibility in Alice's measurements.
\item If $\{p(ab|xy)\}\notin L_{12}^A$, then the measurements $x=1$ and $x=2$ are incompatible.
\item If $\{p(ab|xy)\}\notin L_{2\text{conv}}^A$, then the measurements of Alice are genuinely triplewise incompatible.
\item If $\{p(ab|xy)\}\notin L_{2\cap }^A$, then there is some pairwise incompatibility on Alice's measurements.
\end{itemize}
Notice that we can also define similar sets with respect to Bob's measurements and {consider sets} generated by {given compatibility structures} in {Alice's measurements} and {others} in {Bob's}. 

In what follows, we show that using a set of measurements that satisfy a compatibility structure bounds the amount of violation of certain Bell inequalities. Thus, the observation of a value higher than this bound serves as a certificate that the measurements are incompatible with respect to to this structure. To find these bounds, we need to solve the following optimization problem: given a Bell expression $S=\sum_{abxy} c_{abxy} p(ab|xy)$ and a compatibility structure $\mathcal{C}$,
\ba \label{e:max ineq}
\text{maximize}&\quad S \\
\text{such that}&\quad p(ab|xy) \in L_\mathcal{C} \nonumber\\
&\quad p(ab|xy) \in Q,\nonumber
\ea
where $L_\mathcal{C}$ indicates the set of behaviors {that} are partially local according to the compatibility structure $\mathcal{C}$. Geometrically, this problem can be seen as a maximization of $S$ w.r.t.\ to a set of behaviors that are quantum and satisfy some partial locality (such as the sets $L^Q_{ij}$ in Fig.~\ref{fig:geometrical2}A). The last constraint in \eqref{e:max ineq} imposes that the behavior is quantum (Q), \ie, that it has a quantum realization in terms of local measurements on a quantum state.  In practice, since there is no tractable way of imposing that, we consider sets $Q_n \supseteq Q$ that outer approximate $Q$, {${Q_n}$ being} the $n$-level of the Navascu\'es-Pironio-Ac\'{\i}n (NPA) hierarchy \cite{NPA}. {At} each level $n$, {the} problem {is} a semidefinite program {whose} solution provides an upper bound to the desired bound {and, hence, is} still a valid bound {for detecting} incompatibility.

We emphasize that if Alice performs quantum measurements which are not {genuinely} triplewise incompatible, the {resulting behavior} is inside $L_{2\text{conv}}^Q${; hence} the set $L_{2\text{conv}}^Q$ can be used for device-independent quantum genuine triplewise incompatibility certification.  But since Bell locality does not {necessarily} imply measurement compatibility in general, {the} set $L_{2\text{conv}}^Q$ {may be} larger than the set of quantum {behaviors} generated by imposing that Alice's measurements are not {genuinely} triplewise incompatible. We discuss this in the Appendix where we show that the set of measurements generated by {nongenuinely} triplewise incompatible measurements is strictly smaller than $L_{2\text{conv}}^Q$.



\begin{figure}
	\includegraphics[scale=0.42]{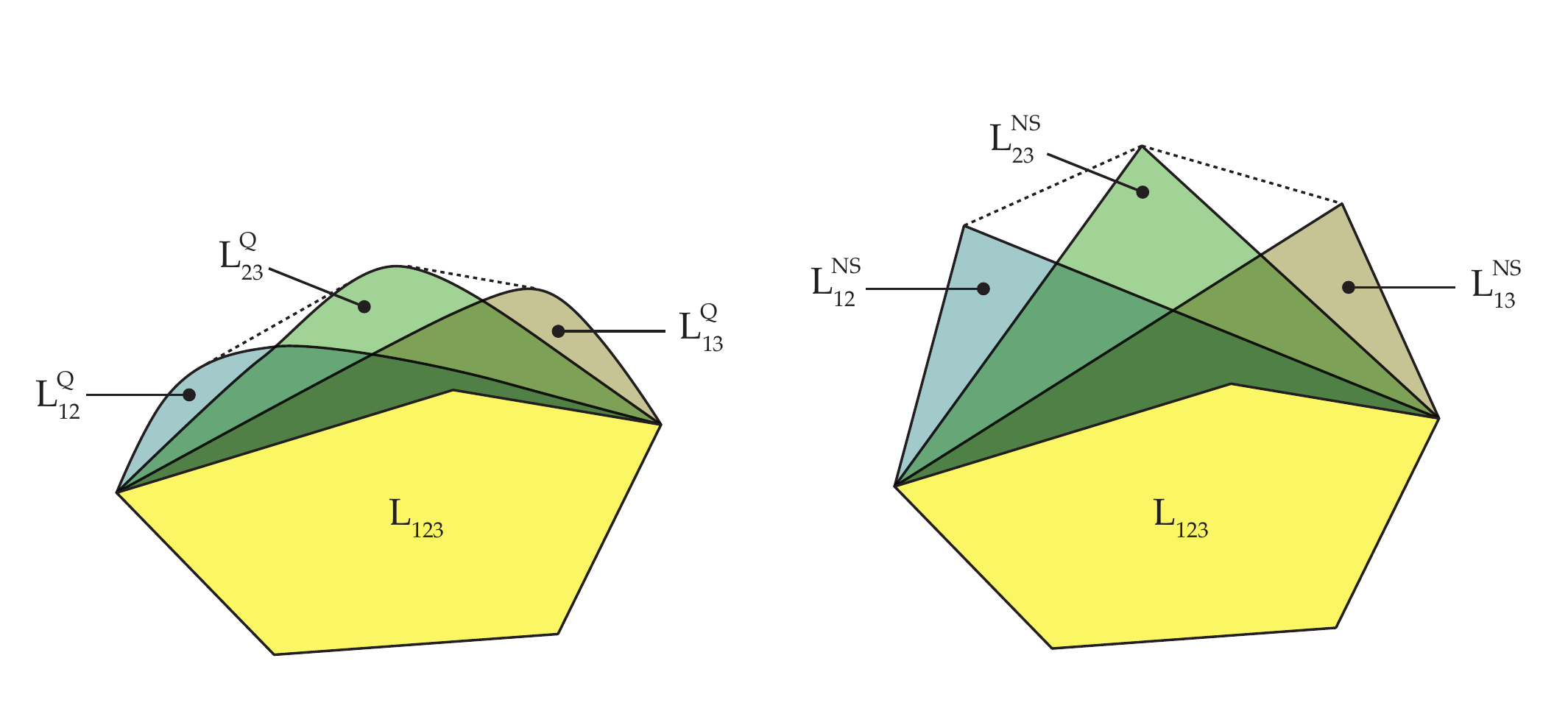} 
	\caption{Geometrical interpretation of sets of three pairwise and triplewise compatible measurements.
		Here $L_{123}$ is the standard local set, where all local measurements are compatible. The set $L_{ij}^{NS}$ consists {of} probabilities that are nonsignaling and {are} partially local w.r.t.\ $i$ and $j$, \ie{}, {are} local when {only} the measurements $i$ and $j$ {are considered on Alice's side}. Analogously, $L_{ij}^{Q}$ is a set of behaviors that are quantum and partially local w.r.t.\ measurements $i$ and $j$.}\label{fig:geometrical2}
\end{figure}


{\em Nonsignaling device-independent witnesses of incompatibility structures.---}It is also possible to test structures of measurement incompatibility not only in QT but in more general nonsignaling theories. For that, we just need to do a similar optimization, but now considering the set of nonsignaling behaviors rather than the set of quantum behaviors. This entails changing the last constraint in~\eqref{e:max ineq} {to} the set of linear constraints {that} defines the general nonsignaling set ${NS}$, \ie, the optimization problem is now
\ba \label{e:max ineq2}
\text{maximize}&\quad S \\
\text{such that}&\quad p(ab|xy) \in L_\mathcal{C} \nonumber\\
&\quad p(ab|xy) \in NS,\nonumber
\ea
 where the last constraint means that the behavior satisfies the nonsignaling conditions
 \begin{subequations}
 \begin{align}
 \sum_a p(ab|x' y)&=\sum_a p(ab|x'' y) ~\forall~x',x'',\\
 \sum_a p(ab|xy')&=\sum_a p(ab|xy'') ~\forall~y',y''. 
 \end{align}
\end{subequations}
Geometrically, this means that the maximization is now running over a bigger set, since $NS\supseteq Q$ (see \eg{} Fig.~\ref{fig:geometrical2}B).

Notice that some of the sets in the problem~\eqref{e:max ineq2}, {which we} denote $L_\mathcal{C}^{\rm NS}$, are easily characterized. In fact, {in the case of dichotomic measurements,} it can be straightforwardly shown that the set $L_{ij}^{\rm NS}$ {is precisely characterized by} the NS constraints plus the {all the} CHSH inequalities involving $A_i,A_j$ and {any} two measurements {on Bob's side}, independently of the number of observables {Bob has}. Similarly, the set $L_{2\cap }^{{\rm NS}}$, obtained as the intersection of the sets for all $ij$, i.e., the union of the systems of inequalities, is described by the NS constraints and all CHSH-type inequalities between {Alice's} and {Bob's} observables.


{\em Results.---}We have run the above optimization problems for a variety of known bipartite Bell expressions $S$ {in scenarios} where Alice has three choices of dichotomic measurements and Bob has three, four, or five choices of dichotomic measurements. After, we completely characterize the polytope $L_{2{\text{conv}}}^{NS}$ by {explicitly} obtaining all the inequalities representing its facets; with that one can easily decide when device-independent certification of genuine triplewise incompatibility is possible if both parties have three dichotomic observables. In order to help {compare} the values, we have set the local bounds of {the} {Bell expressions} to zero and renormalized them such that {their} maximal nonsignaling {bounds are} one. The results are given in Table~\ref{t:results}. 

We first considered all tight Bell inequalities of these scenarios \cite{faacets,quintino14}. Using these inequalities we can test all possible incompatibility structures, including genuine triplewise incompatibility. We then looked at the chained Bell inequality with three inputs \cite{Pearle70,braunstein90}, which is not tight but can be generalized to multiple inputs. We also analyzed the elegant Bell inequality $I_{\text{E}}$ \cite{Elegant} and the chained version of the CHSH inequality proposed in Ref.~\cite{APVW16}, which self-test orthogonal Pauli measurements {on} Alice's side. Although we find quantum violations for every incompatibility structure bound, we did not manage to find a quantum violation of the genuine triplewise incompatibility bounds for general nonsignaling theories. 

In the case of three dichotomic measurements per party we were able to {characterize} the polytope $L_{2\text{conv}}^{NS}$ by {explicitly} obtaining all the inequalities representing its facets (see {the} Appendix). {Among all the} inequalities found there is a single class of inequalities which can be violated by quantum systems, and this inequality is equivalent to the inequality $M_{3322}$ of Ref.~\cite{BGS15}. The $M_{3322}$ inequality can be violated by two-qubit systems and this violation proves that there exist quantum correlations that can not be simulated by any {nonsignaling} model respecting pairwise compatibility. Interestingly, an experimental violation of this inequality was reported in Ref.~\cite{Christensen15}, but the observation of apparent signaling may require a re-analysis of its {conclusions \cite{Smania18,Liang18}.
}

All Bell inequalities tested are explicitly written in the Appendix and the {code} we used {is} available at \cite{mtq_github_incompatibility}.


\begin{table*} 
\begin{tabular}{ccccccccccccc}
\hline \hline 
Ineq. & \;L\; & Qubits & $Q_3$ & $L^{Q_3}_{\cap}$ & $\min{L_{ij}^{Q_3}}$ & $L^{Q_3}_{2\text{conv}}$ & $L^{NS}_{\cap} $ & $\min{L_{ij}^{NS}}$ & $L^{NS}_{2\text{conv}}$ & NS \\ 
	\hline 
	$I_{3322}$ & 0 & \blue{0.2500} & 0.2509 & \red{0.2224} & \red{0.2359} & \red{0.2487} & 0.3333 & 0.5000 & 0.6667   & 1\\ 
	$I_{3422}^1$ & 0 & \blue{0.2761} & 0.2761 & \red{0.1998} & \red{0.1998} & 0.2761 & 0.3333 & 0.5000 & 0.6667   & 1\\ 
	$I_{3422}^2$ & 0 & \blue{0.2990} & 0.2990 & \red{0.2538} & \red{0.2769}  & \red{0.2769} & 0.3333 & 0.6667 & 0.6667 &   1\\ 
	$I_{3422}^3$ & 0 & \blue{0.2910} & 0.2910 & \red{0.1893} & \red{0.2599} & \red{0.2616} & \red{0.2222} & 0.6667 & 0.6667   & 1\\ 
	$I_{3522}$ & 0 & \blue{0.3229} & 0.3229 & \red{0.2145} &\red{0.2675 } & \red{0.2675} & \red{0.2222} & 0.6667 & 0.6667  & 1\\ 
	 $I_{\text{chain3}}$ & 0 & \blue{0.5981} & 0.5981 &\red{0.0000} & \red{0.4142} & \red{0.4142} & \red{0.0000} & 1 &  1 & 1\\ 
 $I_{\text{E}}$ & 0 & \blue{0.1547} & 0.1547 &\red{0.0000} &\red{0.1381}  & \red{0.1381} & \red{0.0000} & \red{0.0000} &   0.3333 & 1\\ 
 $I_{\text{chainCHSH}}$ & 0 & \blue{0.4142} & 0.4142 &\red{0.0000}  & \red{0.2761} & \red{0.2761} & \red{0.0000} & 0.6667 &   0.6667 & 1\\ 
$M_{3322}$ & 0 & \blue{0.0122} & 0.0647 &\red{0.0000} &\red{0.0000} & \red{0.0000} & \red{0.0000} & \red{0.0000} & \red{0.0000}  & 1\\ 
	\hline \hline 
	\end{tabular} 
	\caption{Maximal value of some Bell {expressions} with respect to several constraints. The ``L'' and ``NS'' columns show the local (set to $0$) and nonsignaling (set to $1$) bounds, respectively. The column ''Qubits,'' with values in blue, reports a lower bound for the maximal violation achieved with two-qubit states (see the Appendix for details). The column ''$Q_3$'' gives the maximal value given by the third level of the NPA hierarchy \cite{NPA}, and provides an upper bound on the maximal value that can be found within QT. From column ``$L^{Q_3}_{\cap}$'' to column `` $L^{NS}_{2\text{conv}}$'', we give the bounds found by solving \eqref{e:max ineq} for different types of compatibility structures on {Alice's} side, where $NS$ or $Q_3$ indicates whether the nonsignaling constraints or the third level of the {NPA hierarchy was used}, respectively. A violation of {any of} these bounds rules out the {corresponding compatibility} structure. We have depicted {in} red the bounds that are smaller than the qubit bound, indicating that the {compatibility} structure can be ruled out in two-qubit experiments.}\label{t:results}
\end{table*}


{\em Testing incompatibility structures in the EPR-steering scenario.---}We finally consider the EPR-steering scenario, where no assumptions on Alice's measurements or the shared state are made but Bob can perform state-tomography on his part of the system \cite{wiseman07}. The experiment can be described by an assemblage $\sigma_{a|x}:=\tr_A \left( M_{a|x}\otimes \openone \; \rho \right)$, which represents the unnormalized states held by Bob when Alice performs the measurements labelled by $x$ and obtains the outcome $a$. We show that for any structure $\mathcal{C}=[{C}_1,{C}_2,\ldots,{C}_k]$, there exists a physical assemblage that {allows} us to rule out $\mathcal{C}$. This assemblage is given by local measurements $\{M_{a|x}\}$ applied on any pure entangled state with full Schmidt rank (\eg, the maximally entangled state). This extends the connection between measurement compatibility and EPR-steering established in Refs.~\cite{quintino14,uola14,kiukas17}. See {the} Appendix for more details.


{\em Conclusions and open questions.---}In this Letter, we have shown that different structures of measurement compatibility give rise to constraints in the correlations that can be observed in Bell tests. These constraints can be interpreted as a partial locality, where the {behaviors} can be nonlocal but are seen to be local when restricted to some measurement choices. As a consequence, the violation of Bell inequalities by models satisfying incompatibility structures are reduced with respect to models in which measurements can be arbitrarily incompatible. This fact allows us to test different types of measurement incompatibility in a device-independent way. 

Some open questions follow from our work. First, can any structure of genuine measurement incompatibility (for any number of measurements and outcomes) be realized by quantum system? That would generalize the results of Ref.\ ~\cite{fritz14}, where the authors have shown that any measurement structure can be realized in quantum mechanics. Also, can any structure of genuine measurement incompatibility be device-independently ruled out in QT (\ie, using quantum behaviors)? A second problem is that of mathematically characterizing the partially local sets for other scenarios and, in particular, finding tight inequalities that {limit} them. 


\acknowledgements{The authors thank Teiko Heinosaari for interesting discussions. MTQ acknowledges support from the Japan Society for the Promotion of Science (JSPS) by KAKENHI grant No.\ 16F16769. CB acknowledges support from the Austrian Science Fund (FWF): M 2107 (Meitner-Programm) and ZK 3 (Zukunftskolleg). AC acknowledges support from the Spanish MICINN Project No.\ FIS2017-89609-P with FEDER funds and the Knut and Alice Wallenberg Foundation. DC acknowledges support from the Ramon y Cajal fellowship. {DC and EW acknowledge support from the} Spanish MINECO (QIBEQI, Project No.\ FIS2016-80773-P, and Severo Ochoa SEV-2015-0522), Fundaci\'o Cellex, Generalitat de Catalunya (SGR875 and CERCA Program), and ERC CoG QITBOX. We thank the Benasque Center for Science, where this project was conceived and developed. }


\bibliographystyle{linksen}
\bibliography{DIMIbibliography}

\providecommand{\href}[2]{#2}\begingroup\raggedright\begin{thebibliography}{10}

\bibitem{krausbook}
K.~Kraus, A.~B{\"o}hm, J.~D. Dollard, and W.~H. Wootters, eds.,
  \href{http://dx.doi.org/10.1007/3-540-12732-1}{{\em States, Effects, and
  Operations Fundamental Notions of Quantum Theory: Lectures in Mathematical
  Physics at the University of Texas at Austin}}, vol.~190 of {\em Lecture
  Notes in Physics}.
\newblock Springer, Berlin, Heidelberg, 1983.

\bibitem{cripto_review}
N.~Gisin, G.~Ribordy, W.~Tittel, and H.~Zbinden, ``Quantum cryptography,''
  \href{http://dx.doi.org/10.1103/RevModPhys.74.145}{{\em Rev. Mod. Phys.}
  {\bfseries 74}, 145--195 (2002)},
  \href{http://arxiv.org/abs/quant-ph/0101098}{{\ttfamily
  ar{X}iv:quant-ph/0101098}}.

\bibitem{CHT18}
C.~{Carmeli}, T.~{Heinosaari}, and A.~{Toigo}, ``{Quantum Incompatibility
  Witnesses},'' \href{http://dx.doi.org/10.1103/PhysRevLett.122.130402}{{\em
  \prl} {\bfseries 122}, 130402 (2019)},
  \href{http://arxiv.org/abs/1812.02985}{{\ttfamily arXiv:1812.02985
  [quant-ph]}}.

\bibitem{UKSYG18}
R.~{Uola}, T.~{Kraft}, J.~{Shang}, X.-D. {Yu}, and O.~{G{\"u}hne},
  ``{Quantifying Quantum Resources with Conic Programming},''
  \href{http://dx.doi.org/10.1103/PhysRevLett.122.130404}{{\em \prl} {\bfseries
  122}, 130404 (2019)}, \href{http://arxiv.org/abs/1812.09216}{{\ttfamily
  arXiv:1812.09216 [quant-ph]}}.

\bibitem{SSC19}
P.~Skrzypczyk, I.~\ifmmode \check{S}\else \v{S}\fi{}upi\ifmmode~\acute{c}\else
  \'{c}\fi{}, and D.~Cavalcanti, ``All Sets of Incompatible Measurements give
  an Advantage in Quantum State Discrimination,''
  \href{http://dx.doi.org/10.1103/PhysRevLett.122.130403}{{\em \prl} {\bfseries
  122}, 130403 (2019)}, \href{http://arxiv.org/abs/1901.00816}{{\ttfamily
  arXiv:1901.00816 [quant-ph]}}.

\bibitem{amaral2018graph}
B.~Amaral and M.~T. Cunha,
  \href{http://dx.doi.org/10.1007/978-3-319-93827-1}{{\em On Graph Approaches
  to Contextuality and their Role in Quantum Theory}}.
\newblock SpringerBriefs in Mathematics. Springer, Cham, Switzerland, 2018.

\bibitem{XC18}
Z.-P. {Xu} and A.~{Cabello}, ``{Necessary and sufficient condition for
  contextuality from incompatibility},''
  \href{http://dx.doi.org/10.1103/PhysRevA.99.020103}{{\em \pra} {\bfseries
  99}, 020103(R) (2019)}, \href{http://arxiv.org/abs/1805.02032}{{\ttfamily
  ar{X}iv:1805.02032 [quant-ph]}}.

\bibitem{quintino14}
M.~T. {Quintino}, T.~{V{\'e}rtesi}, and N.~{Brunner}, ``{Joint Measurability,
  Einstein-Podolsky-Rosen Steering, and Bell Nonlocality},''
  \href{http://dx.doi.org/10.1103/PhysRevLett.113.160402}{{\em \prl} {\bfseries
  113}, 160402 (2014)}, \href{http://arxiv.org/abs/1406.6976}{{\ttfamily
  ar{X}iv:1406.6976 [quant-ph]}}.

\bibitem{uola14}
R.~{Uola}, T.~{Moroder}, and O.~{G{\"u}hne}, ``{Joint Measurability of
  Generalized Measurements Implies Classicality},''
  \href{http://dx.doi.org/10.1103/PhysRevLett.113.160403}{{\em \prl} {\bfseries
  113}, 160403 (2014)}, \href{http://arxiv.org/abs/1407.2224}{{\ttfamily
  ar{X}iv:1407.2224 [quant-ph]}}.

\bibitem{NL_review}
N.~Brunner, D.~Cavalcanti, S.~Pironio, V.~Scarani, and S.~Wehner, ``{B}ell
  nonlocality,'' \href{http://dx.doi.org/10.1103/RevModPhys.86.419}{{\em \rmp}
  {\bfseries 86}, 419--478 (2014)},
  \href{http://arxiv.org/abs/1303.2849}{{\ttfamily ar{X}iv:1303.2849
  [quant-ph]}}.

\bibitem{acin07}
A.~Ac\'\i{}n, N.~Brunner, N.~Gisin, S.~Massar, S.~Pironio, and V.~Scarani,
  ``Device-Independent Security of Quantum Cryptography against Collective
  Attacks,'' \href{http://dx.doi.org/10.1103/PhysRevLett.98.230501}{{\em \prl}
  {\bfseries 98}, 230501 (2007)},
  \href{http://arxiv.org/abs/quant-ph/0702152}{{\ttfamily
  ar{X}iv:quant-ph/0702152}}.

\bibitem{randomness_review}
A.~{Ac{\'{\i}}n} and L.~{Masanes}, ``{Certified randomness in quantum
  physics},'' \href{http://dx.doi.org/10.1038/nature20119}{{\em \nat}
  {\bfseries 540}, 213--219 (2016)},
  \href{http://arxiv.org/abs/1708.00265}{{\ttfamily ar{X}iv:1708.00265
  [quant-ph]}}.

\bibitem{bell64}
J.~S. Bell, ``On the {E}instein {P}odolsky {R}osen paradox,'' {\em Physics}
  {\bfseries 1}, 195--200 (1964). \url{http://cds.cern.ch/record/111654/}.

\bibitem{fine82}
A.~Fine, ``Hidden Variables, Joint Probability, and the Bell Inequalities,''
  \href{http://dx.doi.org/10.1103/PhysRevLett.48.291}{{\em Phys. Rev. Lett.}
  {\bfseries 48}, 291--295 (1982)}.

\bibitem{wolf09}
M.~M. {Wolf}, D.~{Perez-Garcia}, and C.~{Fernandez}, ``{Measurements
  Incompatible in Quantum Theory Cannot Be Measured Jointly in Any Other
  No-Signaling Theory},''
  \href{http://dx.doi.org/10.1103/PhysRevLett.103.230402}{{\em \prl} {\bfseries
  103}, 230402 (2009)}, \href{http://arxiv.org/abs/0905.2998}{{\ttfamily
  ar{X}iv:0905.2998 [quant-ph]}}.

\bibitem{chsh69}
J.~F. Clauser, M.~A. Horne, A.~Shimony, and R.~A. Holt, ``Proposed Experiment
  to Test Local Hidden-Variable Theories,''
  \href{http://dx.doi.org/10.1103/PhysRevLett.23.880}{{\em Phys. Rev. Lett.}
  {\bfseries 23}, 880--884 (1969)}.

\bibitem{cavalcantiPRA16}
D.~Cavalcanti and P.~Skrzypczyk, ``Quantitative relations between measurement
  incompatibility, quantum steering, and nonlocality,''
  \href{http://dx.doi.org/10.1103/PhysRevA.93.052112}{{\em \pra} {\bfseries
  93}, 052112 (2016)}, \href{http://arxiv.org/abs/1601.07450}{{\ttfamily
  ar{X}iv:1601.07450 [quant-ph]}}.

\bibitem{chenPRL16}
S.-L. Chen, C.~Budroni, Y.-C. Liang, and Y.-N. Chen, ``Natural Framework for
  Device-Independent Quantification of Quantum Steerability, Measurement
  Incompatibility, and Self-Testing,''
  \href{http://dx.doi.org/10.1103/PhysRevLett.116.240401}{{\em \prl} {\bfseries
  116}, 240401 (2016)}, \href{http://arxiv.org/abs/1603.08532}{{\ttfamily
  ar{X}iv:1603.08532 [quant-ph]}}.

\bibitem{chenPRA18}
S.-L. Chen, C.~Budroni, Y.-C. Liang, and Y.-N. Chen, ``Exploring the framework
  of assemblage moment matrices and its applications in device-independent
  characterizations,'' \href{http://dx.doi.org/10.1103/PhysRevA.98.042127}{{\em
  \pra} {\bfseries 98}, 042127 (2018)},
  \href{http://arxiv.org/abs/1808.01300}{{\ttfamily ar{X}iv:1808.01300
  [quant-ph]}}.

\bibitem{quintino15b}
M.~T. Quintino, J.~Bowles, F.~Hirsch, and N.~Brunner, ``Incompatible quantum
  measurements admitting a local-hidden-variable model,''
  \href{http://dx.doi.org/10.1103/PhysRevA.93.052115}{{\em \pra} {\bfseries
  93}, 052115 (2016)}, \href{http://arxiv.org/abs/1510.06722}{{\ttfamily
  arXiv:1510.06722 [quant-ph]}}.

\bibitem{HQB18}
F.~{Hirsch}, M.~T. {Quintino}, and N.~{Brunner}, ``{Quantum measurement
  incompatibility does not imply Bell nonlocality},''
  \href{http://dx.doi.org/10.1103/PhysRevA.97.012129}{{\em \pra} {\bfseries
  97}, 012129 (2018)}, \href{http://arxiv.org/abs/1707.06960}{{\ttfamily
  arXiv:1707.06960 [quant-ph]}}.

\bibitem{bene17}
E.~{Bene} and T.~{V{\'e}rtesi}, ``{Measurement incompatibility does not give
  rise to Bell violation in general},''
  \href{http://dx.doi.org/10.1088/1367-2630/aa9ca3}{{\em New Journal of
  Physics} {\bfseries 20}, 013021 (2018)},
  \href{http://arxiv.org/abs/1705.10069}{{\ttfamily arXiv:1705.10069
  [quant-ph]}}.

\bibitem{teiko08}
T.~{Heinosaari}, D.~{Reitzner}, and P.~{Stano}, ``{Notes on Joint Measurability
  of Quantum Observables},''
  \href{http://dx.doi.org/10.1007/s10701-008-9256-7}{{\em Found. Phys.}
  {\bfseries 38}, 1133--1147 (2008)},
  \href{http://arxiv.org/abs/0811.0783}{{\ttfamily ar{X}iv:0811.0783
  [quant-ph]}}.

\bibitem{liang11}
Y.-C. {Liang}, R.~W. {Spekkens}, and H.~M. {Wiseman},
  ``{Specker{\textquoteright}s parable of the overprotective seer: A road to
  contextuality, nonlocality and complementarity},''
  \href{http://dx.doi.org/10.1016/j.physrep.2011.05.001}{{\em Phys. Rep.}
  {\bfseries 506}, 1--39 (2011)},
  \href{http://arxiv.org/abs/1010.1273}{{\ttfamily arXiv:1010.1273
  [quant-ph]}}.

\bibitem{kru}
P.~{Kruszy\'{n}ski} and W.~M. {de Muynck}, ``Compatibility of observables
  represented by positive operator-valued measures,''
  \href{http://dx.doi.org/10.1063/1.527487}{{\em J. Math. Phys.} {\bfseries
  28}, 1761--1763 (1987)}.

\bibitem{Tsirelson93}
B.~S. Tsirelson, ``Some results and problems on quantum Bell-type
  inequalities,'' {\em Hadronic Journal Supplement} {\bfseries 8}, 329--345
  (1993). \url{http://www.tau.ac.il/~tsirel/download/hadron.pdf}.

\bibitem{NPA}
M.~Navascu\'e{}s, S.~Pironio, and A.~Ac\'\i{}n, ``A convergent hierarchy of
  semidefinite programs characterizing the set of quantum correlations,''
  \href{http://dx.doi.org/10.1088/1367-2630/10/7/073013}{{\em New J.~Phys}
  {\bfseries 10}, 073013 (2008)},
  \href{http://arxiv.org/abs/0803.4290}{{\ttfamily ar{X}iv:0803.4290
  [quant-ph]}}.

\bibitem{faacets}
\url{http://www.faacets.com/}.

\bibitem{Pearle70}
P.~M. Pearle, ``Hidden-Variable Example Based upon Data Rejection,''
  \href{http://dx.doi.org/10.1103/PhysRevD.2.1418}{{\em \prd} {\bfseries 2},
  1418--1425 (1970)}.

\bibitem{braunstein90}
S.~L. Braunstein and C.~M. Caves, ``Wringing out better Bell inequalities,''
  \href{http://dx.doi.org/10.1016/0003-4916(90)90339-P}{{\em Ann. Phys.}
  {\bfseries 202}, 22--56 (1990)}.

\bibitem{Elegant}
N.~Gisin, \href{http://dx.doi.org/10.1007/978-1-4020-9107-0_9}{``{B}ell
  Inequalities: Many Questions, a Few Answers,''} in {\em Quantum Reality,
  Relativistic Causality, and Closing the Epistemic Circle: Essays in Honour of
  {A}bner {S}himony}, W.~C. Myrvold and J.~Christian, eds., pp.~125--138.
\newblock Springer, Dordrecht, Netherlands, 2009.
\newblock \href{http://arxiv.org/abs/quant-ph/0702021}{{\ttfamily
  ar{X}iv:quant-ph/0702021}}.

\bibitem{APVW16}
A.~Ac\'{\i}n, S.~Pironio, T.~V\'e{}rtesi, and P.~Wittek, ``Optimal randomness
  certification from one entangled bit,''
  \href{http://dx.doi.org/10.1103/PhysRevA.93.040102}{{\em \pra} {\bfseries
  93}, 040102(R) (2016)}, \href{http://arxiv.org/abs/1505.03837}{{\ttfamily
  ar{X}iv:1505.03837 [quant-ph]}}.

\bibitem{BGS15}
N.~{Brunner}, N.~{Gisin}, and V.~{Scarani}, ``{Entanglement and non-locality
  are different resources},''
  \href{http://dx.doi.org/10.1088/1367-2630/7/1/088}{{\em New J.~Phys}
  {\bfseries 7}, 88 (2005)},
  \href{http://arxiv.org/abs/quant-ph/0412109}{{\ttfamily
  ar{X}iv:quant-ph/0412109}}.

\bibitem{Christensen15}
B.~G. Christensen, Y.-C. Liang, N.~Brunner, N.~Gisin, and P.~G. Kwiat,
  ``Exploring the limits of quantum nonlocality with entangled photons,''
  \href{http://dx.doi.org/10.1103/PhysRevX.5.041052}{{\em Phys. Rev.~X}
  {\bfseries 5}, 041052 (2015)},
  \href{http://arxiv.org/abs/1506.01649}{{\ttfamily ar{X}iv:1506.01649
  [quant-ph]}}.

\bibitem{Smania18}
M.~Smania, M.~Kleinmann, A.~Cabello, and M.~Bourennane, ``Avoiding apparent
  signaling in Bell tests for quantitative applications,''
  \href{http://arxiv.org/abs/1801.05739}{{\ttfamily arXiv:1801.05739
  [quant-ph]}}.

\bibitem{Liang18}
Y.-C. {Liang} and Y.~{Zhang}, ``{Bounding the Plausibility of Physical Theories
  in a Device-Independent Setting via Hypothesis Testing},''
  \href{http://dx.doi.org/10.3390/e21020185}{{\em Entropy} {\bfseries 21}, 185
  (2019)}, \href{http://arxiv.org/abs/1812.06236}{{\ttfamily arXiv:1812.06236
  [quant-ph]}}.

\bibitem{mtq_github_incompatibility}
\url{https://github.com/mtcq/di-certification-measurement-incompatibility}.

\bibitem{wiseman07}
H.~M. {Wiseman}, S.~J. {Jones}, and A.~C. {Doherty}, ``{Steering, Entanglement,
  Nonlocality, and the Einstein-Podolsky-Rosen Paradox},''
  \href{http://dx.doi.org/10.1103/PhysRevLett.98.140402}{{\em \prl} {\bfseries
  98}, 140402 (2007)}, \href{http://arxiv.org/abs/quant-ph/0612147}{{\ttfamily
  ar{X}iv:quant-ph/0612147}}.

\bibitem{kiukas17}
J.~Kiukas, C.~Budroni, R.~Uola, and J.-P. Pellonp\"a\"a, ``Continuous-variable
  steering and incompatibility via state-channel duality,''
  \href{http://dx.doi.org/10.1103/PhysRevA.96.042331}{{\em \pra} {\bfseries
  96}, 042331 (2017)}, \href{http://arxiv.org/abs/1704.05734}{{\ttfamily
  ar{X}iv:1704.05734 [quant-ph]}}.

\bibitem{fritz14}
R.~Kunjwal, C.~Heunen, and T.~Fritz, ``Quantum realization of arbitrary joint
  measurability structures,''
  \href{http://dx.doi.org/10.1103/PhysRevA.89.052126}{{\em \pra} {\bfseries
  89}, 052126 (2014)}, \href{http://arxiv.org/abs/1311.5948}{{\ttfamily
  ar{X}iv:1311.5948 [quant-ph]}}.

\bibitem{acin01}
A.~{Ac{\'{\i}}n}, D.~{Bru{\ss}}, M.~{Lewenstein}, and A.~{Sanpera},
  ``{Classification of Mixed Three-Qubit States},''
  \href{http://dx.doi.org/10.1103/PhysRevLett.87.040401}{{\em \prl} {\bfseries
  87}, 040401 (2001)}, \href{http://arxiv.org/abs/quant-ph/0103025}{{\ttfamily
  ar{X}iv:quant-ph/0103025}}.

\bibitem{HeinosaariPRA15}
T.~Heinosaari, J.~Kiukas, and D.~Reitzner, ``Noise robustness of the
  incompatibility of quantum measurements,''
  \href{http://dx.doi.org/10.1103/PhysRevA.92.022115}{{\em \pra} {\bfseries
  92}, 022115 (2015)}, \href{http://arxiv.org/abs/1501.04554}{{\ttfamily
  ar{X}iv:1501.04554 [quant-ph]}}.

\bibitem{PuseyJOB15}
M.~F. Pusey, ``Verifying the quantumness of a channel with an untrusted
  device,'' \href{http://dx.doi.org/10.1364/JOSAB.32.000A56}{{\em J.~Opt. Soc.
  Am.~B} {\bfseries 32}, A56--A63 (2015)},
  \href{http://arxiv.org/abs/1502.03010}{{\ttfamily ar{X}iv:1502.03010
  [quant-ph]}}.

\bibitem{uola15}
R.~Uola, C.~Budroni, O.~G\"uhne, and J.-P. Pellonp\"a\"a, ``One-to-One Mapping
  between Steering and Joint Measurability Problems,''
  \href{http://dx.doi.org/10.1103/PhysRevLett.115.230402}{{\em \prl} {\bfseries
  115}, 230402 (2015)}, \href{http://arxiv.org/abs/1507.08633}{{\ttfamily
  ar{X}iv:1507.08633 [quant-ph]}}.

\bibitem{skrzypczyk14}
P.~{Skrzypczyk}, M.~{Navascu{\'e}s}, and D.~{Cavalcanti}, ``{Quantifying
  Einstein-Podolsky-Rosen Steering},''
  \href{http://dx.doi.org/10.1103/PhysRevLett.112.180404}{{\em \prl} {\bfseries
  112}, 180404 (2014)}, \href{http://arxiv.org/abs/1311.4590}{{\ttfamily
  ar{X}iv:1311.4590 [quant-ph]}}.

\bibitem{piani15}
M.~Piani and J.~Watrous, ``Necessary and Sufficient Quantum Information
  Characterization of Einstein-Podolsky-Rosen Steering,''
  \href{http://dx.doi.org/10.1103/PhysRevLett.114.060404}{{\em \prl} {\bfseries
  114}, 060404 (2015)}, \href{http://arxiv.org/abs/1406.0530}{{\ttfamily
  ar{X}iv:1406.0530 [quant-ph]}}.

\bibitem{dani_paul_review}
D.~Cavalcanti and P.~Skrzypczyk, ``Quantum steering: a review with focus on
  semidefinite programming,''
  \href{http://dx.doi.org/10.1088/1361-6633/80/2/024001}{{\em Rep.\ Prog.\
  Phys.} {\bfseries 80}, 024001 (2016)},
  \href{http://arxiv.org/abs/1604.00501}{{\ttfamily ar{X}iv:1604.00501
  [quant-ph]}}.

\bibitem{boyd04}
S.~Boyd and L.~Vandenberghe, {\em Convex Optimization}.
\newblock Cambridge University Press, 2004.
\newblock \url{https://web.stanford.edu/~boyd/cvxbook/}.

\bibitem{cavalcanti09}
E.~G. {Cavalcanti}, S.~J. {Jones}, H.~M. {Wiseman}, and M.~D. {Reid},
  ``{Experimental criteria for steering and the Einstein-Podolsky-Rosen
  paradox},'' \href{http://dx.doi.org/10.1103/PhysRevA.80.032112}{{\em \pra}
  {\bfseries 80}, 032112 (2009)},
  \href{http://arxiv.org/abs/0907.1109}{{\ttfamily ar{X}iv:0907.1109
  [quant-ph]}}.

\bibitem{Note1}
Strictly speaking, nongenuinely $\protect \mathcal {C}$-incompatible.

\bibitem{Note2}
For our calculations, we have sampled our measurements over the uniform Haar
  measure on projective measurements. However, the method works for any
  sampling measure.

\bibitem{heinosaari16}
T.~Heinosaari, T.~Miyadera, and M.~Ziman, ``An invitation to quantum
  incompatibility,''
  \href{http://dx.doi.org/10.1088/1751-8113/49/12/123001}{{\em J. Phys.~A:
  Math.\ Theor.} {\bfseries 49}, 123001 (2016)},
  \href{http://arxiv.org/abs/1511.07548}{{\ttfamily ar{X}iv:1511.07548
  [quant-ph]}}.

\bibitem{navascues12b}
M.~Navascu\'e{}s, S.~Pironio, and A.~Ac\'\i{}n,
  \href{http://dx.doi.org/10.1007/978-1-4614-0769-0_21}{``{SDP} Relaxations for
  Non-Commutative Polynomial Optimization,''} in {\em Handbook on Semidefinite,
  Conic and Polynomial Optimization}, M.~F. Anjos and J.~B. Lasserre, eds.,
  vol.~166 of {\em International Series in Operations Research \& Management
  Science}, pp.~601--634.
\newblock Springer US, 2012.

\bibitem{gisin91}
N.~Gisin, ``Bell's inequality holds for all non-product states,''
  \href{http://dx.doi.org/10.1016/0375-9601(91)90805-I}{{\em Phys. Lett.~A}
  {\bfseries 154}, 201--202 (1991)}.

\bibitem{PORTA}
\url{https://wwwproxy.iwr.uni-heidelberg.de/groups/comopt/software/PORTA/}.

\bibitem{tsirelson80}
B.~S. Cirel'son, ``Quantum generalizations of {B}ell's inequality,''
  \href{http://dx.doi.org/10.1007/BF00417500}{{\em Lett. Math. Phys.}
  {\bfseries 4}, 93--100 (1980)}.

\bibitem{woodhead18}
E.~Woodhead, B.~Bourdoncle, and A.~Ac\'{\i}n, ``Randomness versus nonlocality
  in the Mermin-Bell experiment with three parties,''
  \href{http://dx.doi.org/10.22331/q-2018-08-17-82}{{\em Quantum} {\bfseries
  2}, 82 (2018)}, \href{http://arxiv.org/abs/1804.09733}{{\ttfamily
  ar{X}iv:1804.09733 [quant-ph]}}.

\bibitem{bamps15}
C.~{Bamps} and S.~{Pironio}, ``{Sum-of-squares decompositions for a family of
  Clauser-Horne-Shimony-Holt-like inequalities and their application to
  self-testing},'' \href{http://dx.doi.org/10.1103/PhysRevA.91.052111}{{\em
  \pra} {\bfseries 91}, 052111 (2015)},
  \href{http://arxiv.org/abs/1504.06960}{{\ttfamily arXiv:1504.06960
  [quant-ph]}}.

\bibitem{Note3}
Notice that, because CHSH inequalities are precisely those {characterising} the
  $L_{ij}^{\protect \rm NS}$ sets, $I_{3322}$ is the only relevant inequality
  for the case of three dichotomic measurements for Alice and for Bob.
  Similarly, for more complex scenarios discussed later on, the CHSH inequality
  will also be irrelevant.

\bibitem{froissart81}
M.~Froissart, ``Constructive generalization of {B}ell's inequalities,''
  \href{http://dx.doi.org/10.1007/BF02903286}{{\em Nuovo Cimento B} {\bfseries
  64}, 241--251 (1981)}.

\bibitem{sliwa03}
C.~{{\'S}liwa}, ``{Symmetries of the Bell correlation inequalities},''
  \href{http://dx.doi.org/10.1016/S0375-9601(03)01115-0}{{\em Phys. Lett.~A}
  {\bfseries 317}, 165--168 (2003)},
  \href{http://arxiv.org/abs/quant-ph/0305190}{{\ttfamily
  ar{X}iv:quant-ph/0305190}}.

\bibitem{collins04}
D.~Collins and N.~Gisin, ``A relevant two qubit {B}ell inequality inequivalent
  to the {CHSH} inequality,''
  \href{http://dx.doi.org/10.1088/0305-4470/37/5/021}{{\em J.~Phys. A: Math.
  Gen.} {\bfseries 37}, 1775 (2004)},
  \href{http://arxiv.org/abs/quant-ph/0306129}{{\ttfamily
  ar{X}iv:quant-ph/0306129}}.

\end{thebibliography}\endgroup

\clearpage


{\section{Appendix A: Genuine triplewise incompatibility}}


As mentioned in the main text, we say that a set of three measurements is genuinely triplewise incompatible if it cannot be written as {a} convex combination of pairwise compatible ones. A trivial example of three measurements that are incompatible but not genuinely triplewise incompatible is given by a set of three measurements where one is the {uniformly} random POVM, with elements given by $\frac{\openone}{d}$, and the other two are incompatible. A more elaborated example is illustrated in Fig.~2 in the main text, where a set of {measurements} admits a decomposition in three sets of noisy Pauli measurements.

\begin{definition}[{Genuine triplewise incompatibility}]
A set of three measurements $\{M_{a|x}\}$
is genuinely triplewise incompatible when it cannot be written as {a} convex {combination} of measurements that are pairwise compatible on a given partition. More specifically, let $\left\{J^{12}_{a|x}\right\}$ be a set of three measurements $\left(x\in\{1,2,3\}\right)$ such that the measurements $x=1$ and $x=2$ are jointly measurable, $\left\{J^{23}_{a|x}\right\}$ a set of three measurements such that the measurements $x=2$ and $x=3$ are jointly measurable, and analogously for $\left\{J^{{13}}_{a|x}\right\}$, where measurements $x=1$ and $x=3$ are compatible. The set $\{M_{a|x}\}$ is genuinely triplewise incompatible if it \emph{cannot} be written as
\begin{equation}
 M_{a|x}=p_{12} J^{12}_{a|x} + p_{23} J^{23}_{a|x} + p_{13} J^{13}_{a|x} 
\end{equation}
for some probabilities $p_{12}$, $p_{23}$, and $p_{13}$ that respect $p_{12}+p_{23}+p_{13}=1$.
\end{definition}
 
By construction, the set of measurements that are not genuinely triplewise compatible is the convex hull of all possible pairwise compatible sets and its geometrical representation is illustrated in Fig.~\ref{geometrical}. We remark the analogy with genuine {tripartite} entanglement for mixed states, where a state is said to be genuinely tripartite entangled when it cannot be written as a convex combination of bipartite-separable ones \cite{acin01}.

 	
\begin{figure}[h!]
\includegraphics[scale=.74]{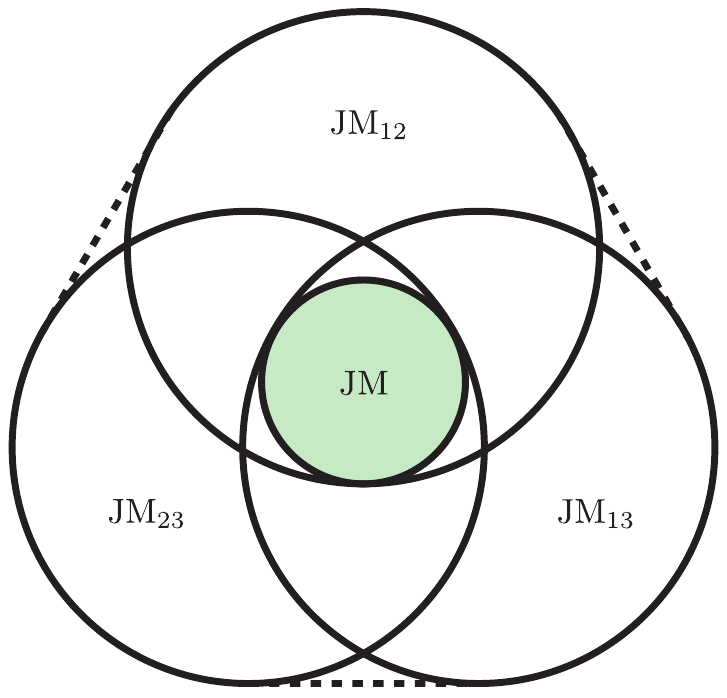} 
\caption{Geometrical interpretation of sets of three pairwise and triplewise compatible measurements. Here, the set $\text{JM}$ is the set where all three measurements are triplewise compatible. $\text{JM}_{12}$ is the set where the measurement $1$ and $2$ are compatible, and {similarly} for $\text{JM}_{23}$ and $\text{JM}_{{13}}$. Sets of measurements outside the convex hull of $\text{JM}_{12}$, $\text{JM}_{23}$, and $\text{JM}_{13}$ are the genuinely triplewise incompatible ones.}\label{geometrical}
\end{figure}

 
\section{Appendix B: More general incompatibility structures}
 

In the previous section, we have restricted ourselves to the scenario with three measurements. However, the concepts and methods used in the previous section can be {generalized} to any compatibility hypergraph. A particular case of interest is that of measurements that are genuinely $n$-wise incompatible, \ie, that cannot be written as convex combinations of $n-1$-wise compatible measurements, but even this notion can be extended to any possible incompatibility structure.

\begin{definition}[{Genuine $\mathcal{C}$-incompatibility}]
Let $\mathcal{C}=\left[C_1,C_2, \ldots, C_N\right]$ represent some compatibility structure. A set {of} measurements $\{M_{a|x}\}$ is genuinely $\mathcal{C}$-incompatible if it cannot be written as {a} convex {combination} of measurements that respect the compatibility structures $C_1$, $C_2$,\ldots, and $C_N$. More precisely, a set of measurements $\{M_{a|x}\}$ is genuinely $\mathcal{C}$-incompatible if it cannot be written as
\begin{equation}
M_{a|x}=\sum_i p_i J^{C_i}_{a|x},
\end{equation}
where $\left\{J^{C_i}_{a|x}\right\}$ are sets {of} measurements respecting the compatibility {structures} $C_i$ and $\{p_i\}$ is a probability distribution.
\end{definition}
 
In this language, the case of genuine triplewise {incompatibility} corresponds to {genuine} $\mathcal{C}$-{incompatibility} with {the choice $\mathcal{C}=[\{1,2\},\{2,3\},\{1,3\}]$}.


\section{Appendix C: SDP formulation for general compatibility}


Similarly to standard measurement compatibility (cf.\ Refs.~\cite{HeinosaariPRA15,PuseyJOB15,uola15} and related measures for EPR-steering \cite{skrzypczyk14,piani15}), the problem of deciding whether a set of measurements is genuinely incompatible for some given structure can be phrased in terms of a {semidefinite} program (SDP). We now state explicitly an SDP that decides if a set of three $d$-dimensional measurements $\{M_{a|1}\}, \{M_{a|2}\},\{M_{a|3}\}$ is genuinely triplewise incompatible. The SDP formulation for more general structures follows straightforwardly.
%

\begin{align}\label{SDP}
\text{Given}& \text{ three POVMs}\quad \{M_{a|1}\}, \; \{M_{a|2}\}, \; \{M_{a|3}\} \\
\text{find}& \quad J^{12}_{a|x}, \; J^{23}_{a|x}, \; J^{31}_{a|x}, \; p_{12},\; p_{23},\; p_{31},\; E^{12}_{\lambda}, \; E^{23}_{\lambda}, \; E^{31}_{\lambda} \nonumber \\
\text{such that}&\quad E_\lambda^{12},E_\lambda^{23},E_\lambda^{13}\geq0, \; \; p_{12}, p_{23}, p_{13}	\geq0, \nonumber\\ 
&\quad M_{a|x}=J^{12}_{a|x}+J^{23}_{a|x} + J^{13}_{a|x} ,\nonumber\\ 
&\quad J^{12}_{a|x}\geq 0, \; \forall a,x ; \; \sum_a J^{12}_{a|x} = p_{12} {\id}, \; \forall x, \nonumber\\ 
&\quad J^{12}_{a|x} = \sum_\lambda D_{\lambda}(a|x) E^{12}_\lambda \quad \text{ for } x=1,x=2 , \nonumber \\
&\quad J^{23}_{a|x}\geq 0, \; \forall a,x ; \; \sum_a J^{23}_{a|x} = p_{23} {\id}, \; \forall x, \; \nonumber\\ 
&\quad J^{23}_{a|x} = \sum_\lambda D_{\lambda}(a|x) E^{23}_\lambda \quad \text{ for } x=2,x=3 ,\nonumber \\
&\quad J^{13}_{a|x}\geq 0, \; \forall a,x ; \; \sum_a J^{13}_{a|x} = p_{13} {\id}, \; \forall x, \nonumber\\ 
&\quad J^{13}_{a|x} = \sum_\lambda D_{\lambda}(a|x) E^{13}_\lambda \quad \text{ for } x=1,x=3, \nonumber 
\end{align}
where $D_{\lambda}(a|x) $ is the set of all deterministic probability distributions in the given scenario.

One can also quantify triplewise incompatibility of a set of measurements using standard SDP methods. Here we present a semidefinite maximisation problem that quantifies how robust the triplewise incompatibility of a set $\{M_{a|x}\}$ is to white noise:
\begin{align}\label{SDP2}
\text{Given}& \{M_{a|x}\} \\
\text{{maximise} }& \eta \nonumber \\
\text{s.t. }& \eta M_{a|x} + (1-\eta) \tr(M_{a|x})\frac{\openone}{d}= J^3_{a|x} \ \forall a,x\nonumber \\
\text{where }& \{J^3_{a|x}\} \text{ feasible solution of problem ~\eqref{SDP}} \nonumber
\end{align}

The SDP problem in Eq.~\eqref{SDP2} can be expanded by inserting explicitly the problem in Eq.~\eqref{SDP} and further {simplifying} as follows:
\begin{align}\label{SDP3}
&\text{Given } \{M_{a|x}\} \\
&\text{{maximise} } \eta \nonumber \\
&\text{s.t. } E^{st}_{\lambda} \geq 0\ \forall \lambda,\ \sum_\lambda E^{st}_\lambda = \frac{\openone}{d} \sum_\lambda \tr[E^{st}_\lambda]\nonumber\\
& \text{ for } (s,t)=(1,2),(1,3),(2,3), \nonumber \\
& \frac{1}{d}\sum_\lambda \tr[E^{12}_\lambda + E^{13}_\lambda + E^{23}_\lambda ] = 1, \nonumber \\
&\eta M_{a|x} + (1-\eta) \tr(M_{a|x})\frac{\openone}{d} \geq \sum_\lambda D_{\lambda}(a|x) 
(E^{sx}_\lambda + E^{tx}_\lambda)\nonumber \\ 
&\forall a \text{ and } (s,t,x)=(1,2,3),(1,3,2),(2,3,1), \nonumber
\end{align}
where we used the convention $E^{xy}_\lambda = E^{yx}_\lambda$ to keep the notation more compact. Such a simplified version can be obtained by noticing that each $J^3_{a|x}$ is given by the sum $J^3_{a|x}= J^{sx}_{a|x}+J^{tx}_{a|x} + J^{st}_{a|x}$, where $J^{sx}_{a|x}$ and $J^{tx}_{a|x}$ arise each from a joint measurement and $J^{st}_{a|x}$ is positive, that $\{E_\lambda^{st}\}$ is proportional to a POVM, and that $\{E^{12}_\lambda + E^{13}_\lambda + E^{23}_\lambda\}$ is a POVM.

We notice that other measures of triplewise incompatibility based {on} robustness and steering weight follow directly from this SDP formulation. We refer to Ref.~\cite{dani_paul_review} for an overview {of} these measures and how to phrase them as SDPs.

Similarly, we can define a robustness {measure} with respect to arbitrary noise as 
\begin{equation}\label{eq:def_rob}
\begin{split}
t^*=\min \left\lbrace\ t \ \middle| J^3_{a|x} = \frac{M_{a|x} + t N_{a|x}}{1+t}, \text{ for } \{J^3_{a|x}\} \text{ sol.\ of} \text{ ~\eqref{SDP}, } \right.\\
\left. \{N_{a|x}\} \text{ meas. assemb. }\right\rbrace.
\end{split}
\end{equation}

For a feasible $t${, the} condition in Eq.~\eqref{eq:def_rob} can be rewritten as
\begin{equation}
\begin{split}
\sum_\lambda D_\lambda(a|x) (E^{sx}_\lambda + E^{tx}_\lambda) + J^{st}_{a|x} = \frac{M_{a|x} + t N_{a|x}}{1+t},\\
\Rightarrow (1+t)\left(\sum_\lambda D_\lambda(a|x) (E^{sx}_\lambda + E^{tx}_\lambda) + J^{st}_{a|x}\right) \geq M_{a|x} .
\end{split}
\end{equation}
By re-absorbing $1+t$ in the {normalisation} of $E^{st}_\lambda$ and $J^{st}_{a|x}$, we obtain the following SDP:
\begin{align}\label{SDP4}
\text{Given} &\{M_{a|x}\} \\
\text{{minimise} }& \frac{1}{d}\sum_\lambda \tr[E^{12}_\lambda + E^{13}_\lambda + E^{23}_\lambda ] \nonumber \\
\text{s.t. } & E^{st}_{\lambda} \geq 0\ \forall \lambda,\ \sum_\lambda E^{st}_\lambda = \frac{\openone}{d} \sum_\lambda \tr[E^{st}_\lambda]\nonumber\\
& \text{ for } (s,t)=(1,2),(1,3),(2,3), \nonumber \\
& J^{st}_{a|x} \geq 0, \ \forall a, \sum_a J^{st}_{a|x} = \sum_\lambda E^{st}_\lambda \nonumber \\
& \text{ for } (s,t,x)=(1,2,3),(1,3,2),(2,3,1), \nonumber \\
& \sum_\lambda D_\lambda(a|x) (E^{sx}_\lambda + E^{tx}_\lambda) + J^{st}_{a|x} \geq M_{a|x},\nonumber
\end{align}	
which gives as solution $1+t^*$, \ie, the robustness $+1$. Notice that this problem {clearly has} a strictly feasible solution (\ie, with strict inequality constraints satisfied), \eg, take each $E^{st}_\lambda = \openone$ and the corresponding $J^{st}_{a|x}$ coming from the linear constraints. As a consequence, Slater's condition is satisfied and the optimal values of the primal and dual {problems} coincide \cite{boyd04}.

We have implemented {code} to obtain the white noise robustness of genuine triplewise incompatibility of general $d$-dimensional measurements. Our {code} can be {found} at the online repository \cite{mtq_github_incompatibility} and can be freely used and edited.


{\section{Appendix D: General incompatibility witnesses}}


Since the set of {nongenuinely triplewise incompatible} measurements is convex, the separating hyperplane theorem states that there is always a genuine triplewise incompatibility witness {that can detect any given set of genuinely triplewise incompatible measurements} \cite{boyd04}. That is, there exists a set of operators $\{F_{a|x}\}$ acting on the same space {as} the measurements and a constant bound $\beta$ such that all nongenuinely triplewise {incompatible} measurements $\{M_{a|x}\}$ respect 
\begin{equation}\label{eq:witn_ineq}
 \sum_{a,x} \tr (F_{a|x} M_{a|x}) \leq \beta,
\end{equation}
{but the genuinely triplewise incompatible measurements under consideration violate this bound.}

For instance, such {a} witness can be obtained from any solution {to} the dual of the SDP in Eq.~\eqref{SDP4}. In fact, by substituting the equality constraints $ \sum_\lambda E^{st}_\lambda = \frac{\openone}{d} \sum_\lambda \tr[E^{st}_\lambda]$ and $\sum_a J^{st}_{a|x} = \sum_\lambda E^{st}_\lambda$ with two inequality constraints, one obtains the SDP in the standard form \cite{boyd04}
\begin{align}\label{SDP_st}
\text{Given} &\{C,B, \Phi\}\\
\text{{minimise} }& \mean{C,X} \nonumber \\
\text{s.t. } & \Phi[X] \geq B \nonumber\\
& X\geq 0 \nonumber, 
\end{align}	
which has as dual problem
\begin{align}\label{SDP_dual}
\text{Given} &\{C,B, \Phi\}\\
\text{{maximise} }& \mean{B,Y} \nonumber \\
\text{s.t. } & \Phi^\dagger[Y] \leq C \nonumber\\
& Y\geq 0 \nonumber. 
\end{align}	

By comparing Eq.~\eqref{SDP_st} with Eq.~\eqref{SDP4}, one notices that $B$ is written in terms of the given measurements $\{M_{a|x}\}$, since all other inequalities involve only the variables $E^{st}_\lambda$ and $J^{st}_{a|x}$.
As a consequence, the expression ``$\text{{maximise} } \mean{B,Y}$'' could be rewritten as ``$\text{{maximise} }\sum_{a,x} \tr (F_{a|x} M_{a|x})$'' and the value of such {an} expression, by strong duality, will correspond to the optimal value of the primal problem $1+t^*$, where $t^*$ is the robustness appearing in Eq.~\eqref{eq:def_rob}. Hence, by explicitly constructing the operators $\{F_{a|x}\}$ in terms of the matrix $B$ of the primal problem, we can certify genuine triplewise incompatibility with a violation of the condition
\begin{equation}
 \sum_{a,x} \tr (F_{a|x} M_{a|x}) \leq 1.
\end{equation}


\begin{table}
	\begin{tabular}{cccc}
	\hline \hline
	Quantum & $\text{JM}$ & $\text{JM}_2$ & $\text{JM}_3$ \\ 
	\hline 
	$6$ & $\frac{6}{\sqrt{3}} \approx 3.464 $ & $ \frac{6}{\sqrt{2}} \approx 4.242 $& $ 2(\sqrt{2}+1) \approx 4.828$ \\ 
	\hline \hline
	\end{tabular} 
	\caption{Table {summarising} the maximal attainable value that can be obtained on the compatibility witness given by $\tr (XM_1 + YM_2 + ZM_3)$, see Eq.~\eqref{witness}.
	The column ``Quantum'' gives the maximum {value} attainable by general unconstrained qubit measurements. ``$\text{JM}$'' stands for the usual joint measurability, where all three measurements are compatible. ``$\text{JM}_2$'' indicates measurements that are pairwise compatible and ``$\text{JM}_3$'' measurements that are not genuinely triplewise incompatible.
	}\label{t:results2}
\end{table}

We now present another example of {a} genuine triplewise incompatibility witness by exploring a known standard compatibility witness.
Consider three dichotomic qubit measurements described by $\{M_{a|x}\}$. We define the associated observable of a particular measurement by $M_x = M_{1|x}-M_{2|x}$ and the value $\beta$ of the witness for a particular measurement by
\begin{equation} \label{witness}
\tr (XM_1 + YM_2 + ZM_3) = \beta,
\end{equation}
where $X$, $Y$, {and} $Z$ are the Pauli {operators}.

Exploring results on steering {witnesses} \cite{cavalcanti09,skrzypczyk14} and {their} strong connection with joint measurability \cite{quintino14,uola14}, one can show that fully compatible measurements can obtain at most $\beta_{\text{JM}} = \frac{6}{\sqrt{3}}$ and pairwise {compatible measurements} (in all possible pairs)  $\beta_{\text{JM}_2} = \frac{6}{\sqrt{2}}$. {Using concepts from SDP, we can also show that nongenuinely triplewise incompatible measurements can attain at most $\beta_{\text{JM}_3} = 2(\sqrt{2}+1)$. Since the witness \eqref{witness} is linear and the set of nongenuinely triplewise incompatible measurements is nothing but the convex hull of the sets of pairwise compatible measurements $\text{JM}_{12}$, $\text{JM}_{13}$, and $\text{JM}_{23}$, we need only determine the maximal value of \eqref{witness} in each of these three sets and take the maximum.}

{We do this here for $\text{JM}_{12}$; the maximal values for $\text{JM}_{13}$ and $\text{JM}_{23}$ will inevitably be the same due to the symmetry of the witness. In this case, we want to show that
\begin{equation}
  \label{witness_JM12}
  \tr(X M_{1} + Y M_{2} + Z M_{3}) \leq 2 \sqrt{2} + 2
\end{equation}
whenever the measurements underlying $M_{1}$ and $M_{2}$ are compatible. Since there is no constraint involving $M_{3}$, clearly $\tr(Z M_{3}) \leq 2$, and we only need to prove that $\tr(X M_{1} + Y M_{2}) \leq 2 \sqrt{2}$. Substituting explicitly an underling POVM,
\begin{align}
  M_{1} = E_{1} + E_{2} - E_{3} - E_{4} , \\
  M_{2} = E_{1} - E_{2} + E_{3} - E_{4} ,
\end{align}
the term in the witness we want to maximise can be written as
\begin{equation}
  \label{witness_povms}
  \tr\bigro{(X + Y) (E_{1} - E_{4}) + (X - Y) (E_{2} - E_{3})}
\end{equation}
with the conditions $E_{\lambda} \geq 0$ and $\sum_{\lambda} E_{\lambda} = \id$. The dual of this maximisation problem, whose solution is an upper bound on \eqref{witness_povms}, can be written compactly as
\begin{IEEEeqnarray}{u?rCl}
  minimise & \IEEEeqnarraymulticol{3}{l}{\tr(\sigma)} \\
  subject to & \sigma &\geq& X + Y , \nonumber \\
  & \sigma &\geq& X - Y , \nonumber \\
  & \sigma &\geq& -X + Y , \nonumber \\
  & \sigma &\geq& -X - Y . \nonumber
\end{IEEEeqnarray}
This problem has as a feasible solution $\sigma^{*} = \sqrt{2} \id$, for which $\tr(\sigma^{*}) = 2 \sqrt{2}$, proving that \eqref{witness_povms} is upper bounded by $2 \sqrt{2}$. Finally, to see that \eqref{witness_JM12} is tight, note that the upper bound $2 \sqrt{2} + 2$ is attained with
\begin{IEEEeqnarray}{c+t+c}
  M_{1} = M_{2} = (X + Y)/\sqrt{2} &and& M_{3} = Z , \IEEEeqnarraynumspace
\end{IEEEeqnarray}
where $M_{1}$ and $M_{2}$ are compatible.}

{Hence, any set of measurements that attains
\begin{equation}
  \tr (XM_1 + YM_2 + ZM_3) > 2(\sqrt{2} + 1)
\end{equation}
is genuinely triplewise incompatible.}


\section{Appendix E: Numerical methods for the device-independent case}


Deciding whether a set of probabilities given by $\{p(ab|xy)\}$ is Bell local can be phrased in terms of linear programming (LP) \cite{NL_review}. With similar ideas, we can also write {an} LP {to test} whether a {probability distribution} can arise from a model with partial compatibility given by a compatibility structure $\mathcal{C}=[{C}_1,{C}_2,\ldots,{C}_k]$,
\begin{align}\label{LP}
\text{Given}&\quad \{P(ab|xy) \}_{abxy}, \mathcal{C}\\
\text{find}& \quad \{ p_{\lambda_i} \}_{\lambda_i} \nonumber \\
\text{s.t.}&\quad P(ab|xy)=\sum_{\lambda_i} p_{\lambda_i} D_{\lambda_i}(a|x)D_{\lambda_i}(b|y)\nonumber \\
&~ \forall x\in {C}_i, ~\forall i,~\forall a,b,y \nonumber, \\ 
&\quad p_{\lambda_i}\geq0~ \sum_{\lambda_i}p_{\lambda_i}=1\quad \forall \lambda_i, i, \nonumber
\end{align}
where {$\lambda_i$} is the local hidden variable associated to the compatibility subset ${C}_i$. 

We can also have {an} LP {characterisation} for the set $L_\mathcal{C}^{NS}$, where the probabilities are local\footnote{Strictly speaking, nongenuinely $\mathcal{C}$-incompatible. } in the compatibility structure $\mathcal{C}$ and all nonsignaling constraints are {respected}. For that, we just need to notice that the nonsignaling constraints
\begin{subequations}
\begin{align}
\sum_a p(ab|x' y)=\sum_a p(ab|x'' y) ~\forall~x',x'' ,\\
\sum_a p(ab|xy')=\sum_a p(ab|xy'') ~\forall~y',y''
\end{align}
\end{subequations}
are linear{; hence,} we can just add the nonsignaling constraints to the ones of the LP of Eq.~\eqref{LP}.

For device-independent certification of genuine $\mathcal{C}$-incompatibility, we define the set $L_\mathcal{C}^{Q}$, which imposes the constraints of Eq.~\eqref{LP} and that the full distribution $\{p(ab|xy)\}$ admits a quantum {realisation}. Deciding if a set of probabilities $\{p(ab|xy)\}$ admits a quantum realisation is known to be a very hard problem but an outer approximation of the set of distributions with {a} quantum {realisation} can be made via the NPA hierarchy \cite{NPA}. The NPA hierarchy consists {of} a set of outer approximations that {converge} to the set of distributions with {a} quantum {realisation}. Each step of this hierarchy admits {an} SDP {characterisation;} hence, by adding this SDP constraint to the LP of Eq.~\eqref{LP}, we can certify genuine $\mathcal{C}$-incompatibility in quantum mechanics. 

For the maximal qubit violation, we obtain lower bounds by explicitly providing the state and measurements. We make use of see-saw method that exploits the semidefinite program presented {in Appendix~C} of Ref.~\cite{quintino14}. Let $\gamma_{ab|xy}$ be the coefficients of a Bell {expression} that is written as
\begin{equation}
B=\sum_{abxy} \gamma_{ab|xy} p(ab|xy).
\end{equation}
Any qudit quantum probability can be written as $p(ab|xy)= \tr\left( A_{a|x} \sigma_{b|y} \right)$, where $\{A_{a|x}\}$ is a valid $d$-dimensional POVM and $ \sigma_{b|y}=\tr \left( \openone\otimes B_{b|y}\; \rho\right)$ is an assemblage {defined} by a set of POVMs $\{B_{b|y}\}$ and a bipartite quantum state $\rho$. In order to obtain a lower bound for the maximal qudit violation we choose a random\footnote{For our calculations, we have sampled our measurements over the uniform Haar measure on projective measurements. However, the method works for any sampling measure.} set of measurements for Alice and use an SDP provided in Ref.~\cite{quintino14} to obtain the assemblage $\{\sigma_{b|y}\}$ that attains the maximal quantum violation for {the} fixed measurements $\{A_{a|x}\}$. We now fix the optimal assemblage $\{\sigma_{b|y}\}$ obtained in the previous step to perform another SDP, now {optimising} over all possible choices of measurements for Alice. Iterations of this method provide a lower bound for optimal qudit violation.
	
All {code} used to construct Table~\Rnum{1} of the main text can be found in the online repository at \cite{mtq_github_incompatibility} and can be freely used and edited.


\section{Appendix F: Device-independent tests of quantum genuine triplewise incompatibility}
\label{di_quantum_jm}

In the main text we defined the set $L_{12}^Q$ which consists of behaviors $\{p(ab|xy)\}$ that are Bell local when Alice's measurements are restricted to $x=1$ and $x=2$ and can be generated by quantum local measurements. Verifying that a quantum behavior is outside of $L_{12}^Q$ is one way to certify that Alice's measurements associated to $x=1$ and $x=2$ are incompatible. As mentioned in the main text, however, since locality does not necessarily imply measurement compatibility, the restrictions of $L_{12}^Q$ are not necessarily equivalent to enforcing that Alice's quantum measurements $A_1$ and $A_2$ are compatible.

To treat this problem formally we define the set $Q_{12_\text{JM}}$, which consists of behaviors that admit a quantum realisation in terms of local measurements on a quantum state and where Alice's quantum measurements associated to $x=1$ and $x=2$ are compatible. More precisely, $\{p(ab|xy)\} \in Q_{12_\text{JM}}$ if there exists a quantum state $\rho$ and sets of POVMs $\{A_{a|x}\}$ and $\{B_{b|y}\}$ where the POVMs $\{A_{a|1}\}$, $\{A_{a|2}\}$ are compatible and $p(ab|xy)=\tr (\rho_AB A_{a|x}\otimes B_{b|y}), \forall a,b,x,y$.

In order to characterize $Q_{12_\text{JM}}$, we note that two POVMs $\{A_{a|1}\}$ and $\{A_{a|2}\}$ are compatible if and only if there exists a common underlying POVM $\{M_{a_1,a_2}\}$ such that $A_{a_1|1}=\sum_{a_2} A_{a_1,a_2} \forall a_1$ and $A_{a_2|2}=\sum_{a_1} A_{a_1,a_2} \forall a_2$ \cite{heinosaari16}. We can then characterize the set $Q_{12_\text{JM}}$ by enforcing that there is a single measurement $\{M_{a_1,a_2}\}$ that is associated to $x=1$ and $x=2$. Since we are working in a device-independent scenario, we can also always take this measurement to be projective without loss of generality, which is equivalent to saying that the operators $A_{a_{1}|1}$ and $A_{a_{2}|2}$ are projective and commute with each other. We can thus, alternatively and equivalently, characterize $Q_{12_\text{JM}}$ by enforcing that the operators of Alice's measurements $\{A_{a|1}\}$ and $\{A_{a|2}\}$ commute. Either constraint can, for instance, be used in the NPA hierarchy to obtain outer approximations for the set $Q_{12_\text{JM}}$, in the latter case as additional equality constraints between the measurement operators \cite{navascues12b}.

Analogously to $Q_{12_\text{JM}}$, we can also define and similarly characterize the sets $Q_{23_\text{JM}}$ and $Q_{13_\text{JM}}$, as well as their convex hull, $Q_{2\text{conv}_\text{JM}}$. It follows by construction that if a quantum behavior is outside $Q_{2\text{conv}_\text{JM}}$, Alice's measurements are necessarily genuinely triplewise incompatible. Since compatible measurements can only lead to local statistics, we have that $L_{2\text{conv}}^Q \supseteq Q_{2\text{conv}_\text{JM}}$.

A natural question then arises: are $Q_{2\text{conv}_\text{JM}}$ and $L_{2\text{conv}}^Q$ the same set? To answer this, we compare the values of the elegant Bell expression of Ref.~\cite{gisin91} (reproduced as Eq.~\eqref{I_E} below) that are attainable with behaviors in $Q_{2\text{conv}_\text{JM}}$ and $L_{2\text{conv}}^Q$ and show that they are different.

In Section~\ref{sos_decompositions} below we prove that, for behaviors in $Q_{2\text{conv}_{\text{JM}}}$, the elegant Bell expression $I_{\text{E}}$ respects the tight upper bound
\begin{equation}
  \label{elegant_Q2convJM_bound}
  I_{\text{E}} \leq 2 + 2 \sqrt{5} .
\end{equation}
This translates to about $0.07869$ in the scaling used in Table~\Rnum{1} in the main text, significantly less than the corresponding bound of $0.1381$ that we found for $L_{2\text{conv}}^Q$. However, since the quantum bound for $L_{2\text{conv}}^Q$ was obtained numerically at level~3 of the NPA hierarchy, which we only know to be a relaxation of the quantum set, the gap between $0.07869$ and $0.1381$ strictly speaking only establishes that $Q_{2\text{conv}_\text{JM}} \neq L_{2\text{conv}}^{Q_{3}}$. To prove that $Q_{2\text{conv}_\text{JM}} \neq L_{2\text{conv}}^Q$, however, we need only exhibit a behavior in $L_{2\text{conv}}^Q$ that violates the $Q_{2\text{conv}_\text{JM}}$ bound \eqref{elegant_Q2convJM_bound}.

Just such a behavior can be constructed by adjusting the ideal quantum strategy that maximimally violates the elegant Bell inequality. More precisely, the behavior is obtained by Alice and Bob performing measurements of the form
\begin{IEEEeqnarray}{c+c+c}
  A_{1} = X, & A_{2} = Y, & A_{3} = Z
\end{IEEEeqnarray}
and
\begin{align}
  B_{1} &= \frac{\sin(\mu)}{\sqrt{2}} \bigro{X - Y}
  + \cos(\mu) Z , \\
  B_{2} &= \frac{\sin(\mu)}{\sqrt{2}} \bigro{X + Y}
  - \cos(\mu) Z , \\
  B_{3} &= \frac{\sin(\mu)}{\sqrt{2}} \bigro{-X - Y}
  - \cos(\mu) Z , \\
  B_{4} &= \frac{\sin(\mu)}{\sqrt{2}} \bigro{-X + Y}
  + \cos(\mu) Z
\end{align}
on the maximally-entangled state $\ket{\phi^{+}} = (\ket{00} + \ket{11})/\sqrt{2}$. With this strategy the elegant Bell expression attains
\begin{equation}
  \label{elegant_mu}
  I_{\text{E}} = 4 \sqrt{2} \sin(\mu) + 4 \cos(\mu) ,
\end{equation}
depending on the parameter $\mu$. We now impose that the behavior is contained in $L_{12}^Q$, which we recall is a subset of $L_{2\text{conv}}^Q$. To this end, we require that the behavior satisfies all of the CHSH inequalities involving $A_{1}$ and $A_{2}$. One can verify that the highest of the relevant CHSH expectation values is
\begin{equation}
  S_{\text{max}_{12}} = 2 \sqrt{2} \abs{\sin(\mu)} ;
\end{equation}
thus, the behavior is in $L_{12}^{Q}$ provided that $\abs{\sin(\mu)} \leq 1/\sqrt{2}$. Making the specific choice $\sin(\mu) = \cos(\mu) = 1/\sqrt{2}$, then, according to \eqref{elegant_mu} we attain a value of the elegant Bell expression,
\begin{equation}
  I_{\text{E}} = 4 \Bigro{1 + \frac{1}{\sqrt{2}}},
\end{equation}
that translates to approximately $0.1381$ in the scaling of Table~\Rnum{1} and significantly violates the $Q_{2\text{conv}_\text{JM}}$ bound \eqref{elegant_Q2convJM_bound} given above. In this way we confirm that the set $Q_{2\text{conv}_\text{JM}}$ is strictly smaller than $L_{2\text{conv}}^Q$.



\section{Appendix G: Facets of the $L_{2\text{\lowercase{conv}}}^{NS}$ polytope}
\label{L2convNS_facets}

As we mentioned in the main text, the sets $L_{2\text{conv}}^{NS}$ are polytopes. As such, they can be completely characterized either as the convex hulls of finite numbers of vertices or as the sets of behaviors satisfying finite numbers of inequalities corresponding to their facets.

One can in general derive a polytope's facets given its vertices (or vice versa) using existing algorithms, although in practice the problem scales badly and rapidly becomes intractable for large polytopes. In the simplest case, however, where Alice and Bob each have three inputs and two outputs, we were able to exactly characterize $L_{2\text{conv}}^{NS} = \conv\bigro{L_{12}^{NS} \cup L_{13}^{NS} \cup L_{23}^{NS}}$ and derive all of its facets.

We briefly describe the procedure we followed. The sets $L_{12}^{NS}$, $L_{13}^{NS}$, and $L_{23}^{NS}$ are all fully characterized by a finite number of inequalities that we already know, corresponding to the positivity and some of the CHSH facets of the local polytope. We first used the software PORTA \cite{PORTA} (which can solve vertex and facet enumeration problems using exact rational arithmetic) to derive the vertices of these three sets. We then took the union of the three sets of vertices (of which $L_{2\text{conv}}^{NS}$ is the convex hull) and used PORTA again to find the corresponding facets. Finally, we grouped together facets that were equivalent to each other up to relabellings of inputs and outputs (but not parties).

The polytope $L_{2\text{conv}}^{NS}$ turns out to have a total of $4452$ facets which can be grouped into six equivalence classes that we list here. In terms of the full- and single-body correlators, which we define as
\begin{align}
  \mean{A_xB_y} &= p(a=b|xy) - p(a\neq b|xy) , \\
  \avg{A_x} &= p(a=0|x) - p(a=1|x) , \\
  \avg{B_{y}} &= p(b=0|y) - p(b=1|y) ,
\end{align}
$L_{2\text{conv}}^{NS}$ has $36$ inequalities representing positivity ($p(ab|xy) \geq 0$) conditions which are relabelling equivalent to
\begin{equation}
  \label{F1}
  F_{1} = \mean{A_1} + \mean{B_1} -\mean{A_1B_1} \leq 1,
\end{equation}
$384$ inequalities which are equivalent to
\begin{align} \label{F2}
F_{2}=& \phantom{+} \mean{A_1}  +\mean{A_2}+\mean{A_3}\\
&+  \mean{B_1}  +\mean{B_2}+\mean{B_3} \nonumber \\ &
+\mean{A_1B_1}-\mean{A_1B_2}-\mean{A_1B_3} \nonumber \\ &
-\mean{A_2B_1}+\mean{A_2B_2}-\mean{A_2B_3} \nonumber \\ &
-\mean{A_3B_1}-\mean{A_3B_2}+\mean{A_3B_3} \leq 7 \nonumber,
\end{align}
$576$ inequalities which are equivalent to
\begin{align} 
F_{3}=& \phantom{+} \mean{A_1}  +\mean{A_2}\\
&+\mean{A_1B_1}+\mean{A_1B_2}+\mean{A_1B_3} \nonumber \\ &
+\mean{A_2B_1}-\mean{A_2B_2}-\mean{A_2B_3} \nonumber \\ &
+2\mean{A_3B_2}-2\mean{A_3B_3} \leq 8 \nonumber,
\end{align}
$1152$ inequalities which are equivalent to
\begin{align} 
F_{4}=& \phantom{+} \mean{A_1}  +\mean{A_2}+\mean{A_3}\\
&+  \mean{B_1}  +\mean{B_2}+\mean{B_3} \nonumber \\ &
+2\mean{A_1B_1}+2\mean{A_1B_2}-\mean{A_1B_3} \nonumber \\ &
+\mean{A_2B_1}-2\mean{A_2B_2}-2\mean{A_2B_3} \nonumber \\ &
-2\mean{A_3B_1}+\mean{A_3B_2}-2\mean{A_3B_3} \leq 11 \nonumber,
\end{align}
$1152$ inequalities which are equivalent to
\begin{align}
  \label{F5}
  F_{5}=& \phantom{+}  \mean{A_1}+\mean{A_2}+\mean{A_3}  + \mean{B_1}\\
  &+2\mean{A_1B_1}+2\mean{A_1B_2}-\mean{A_1B_3} \nonumber \\ &
  +2\mean{A_2B_1}-\mean{A_2B_2}-2\mean{A_2B_3} \nonumber \\ &
  +\mean{A_3B_1} - 3\mean{A_3B_2} + 3\mean{A_3B_3} \leq 13 \nonumber,
\end{align}
and $1152$ inequalities which are equivalent to
\begin{align} \label{F6}
F_{6}=& \phantom{+} \mean{A_1}  +\mean{A_2}+\mean{B_1}+\mean{B_2}\\
&+\mean{A_1B_1}-\mean{A_1B_2}+\mean{A_1B_3} \nonumber \\ &
+\mean{A_2B_1}-\mean{A_2B_2}-\mean{A_2B_3} \nonumber \\ &
+\mean{A_3B_1}+\mean{A_3B_2} \leq 6 . \nonumber
\end{align}
Of these, the inequality \eqref{F6} is equivalent to the $M_{3322}$ inequality of Ref.~\cite{BGS15} (this can be checked, for instance, with the algorithms of faacets \cite{faacets}) and is the only one violated in quantum physics. For the other five Bell expressions, the local and quantum bounds are identical to the $L_{2\text{conv}}^{NS}$ bounds given in \eqref{F1}--\eqref{F5}. They are thus examples of Bell inequalities with no quantum violation. This is obvious for the positivity inequality \eqref{F1}. The proofs that \eqref{F2}--\eqref{F5} represent the quantum bounds of the four remaining Bell expressions, $F_{2}$ to $F_{5}$, are given in Section~\ref{sos_decompositions}.
	
We have also completely characterized the correlation polytope for this scenario, that is, the projection of $L_{2\text{conv}}^{NS}$ involving only the full-body correlators $\mean{A_x B_y}$. The correlation polytope has $18$ facets representing positivity which are relabelling equivalent to 
\begin{equation}
  \label{FC1}
  FC_{1} = \mean{A_xB_y} \leq 1 ,
\end{equation}
$192$ facets which are relabelling equivalent to 
\begin{align}  \label{FC2}
FC_{2}=& \phantom{+}  2\mean{A_1B_1}+2\mean{A_1B_2}+\mean{A_1B_3}  \\ &
+2\mean{A_2B_1}-\mean{A_2B_2}-2\mean{A_2B_3} \nonumber \\ &
+\mean{A_3B_1}-2\mean{A_3B_2}+2\mean{A_3B_3} \leq 11 \nonumber,
\end{align}
and $288$ facets which are relabelling equivalent to 
\begin{align} \label{FC3}
FC_{3}=& \phantom{+}  \mean{A_1B_1}+\mean{A_1B_2}+\mean{A_1B_3}  \\ &
+\mean{A_2B_1}+\mean{A_2B_2}-\mean{A_2B_3} \nonumber \\ &
+\mean{A_3B_1}-\mean{A_3B_2} \leq 6 \nonumber.
\end{align}
Neither \eqref{FC2} or \eqref{FC3} can be violated or even attained in quantum physics. $FC_{3}$ has a local bound of $4$ and a quantum bound of $5$, while the local and quantum bounds of $FC_{2}$ are both $9$. Both quantum bounds are proved in Section~\ref{sos_decompositions}.



\section{Appendix H: Proofs of quantum bounds}
\label{sos_decompositions}

In the two previous sections we asserted exact quantum bounds for several Bell expressions. We collect proofs of these together in this section. All of the proofs below are given in the form of a sum-of-squares (SOS) decomposition for the corresponding quantum Bell operator together with an explicit quantum realisation showing that the bound can be attained in quantum physics.

SOS decompositions are closely related \cite[Section~21.2.6]{navascues12b} to the NPA hierarchy that we used to obtain the numeric quantum bounds reported in the main text. In fact, an SOS decomposition is simply a way of giving a feasible solution to the dual SDP of a given level of the NPA hierarchy in a factored form that is positive semidefinite by construction. In the context of Bell nonlocality, an SOS decomposition was first used in \cite{tsirelson80} to prove the quantum bound $S \leq 2 \sqrt{2}$ of the CHSH expression.

We were able to derive some of the simpler SOS decompositions below by trial and error. The more complicated ones were derived following a semi-systematic method that is described in Section~3.4 of \cite{woodhead18} (see also \cite{bamps15}).

\subsection{Correlation polytope facets}

We begin with $FC_{3}$ to illustrate the technique. Proving that the bound $FC_{3} \leq 5$ asserted in Section~\ref{L2convNS_facets} holds in quantum physics is equivalent to proving that the shifted Bell operator $5 \id - B_{FC_{3}}$ is positive semidefinite, where
\begin{align}
B_{FC_{3}} =& \phantom{+} A_1 B_1 + A_1 B_2 + A_1 B_3  \\ &
+ A_2 B_1 + A_2 B_2 - A_2 B_3 \nonumber \\ &
+ A_3 B_1 - A_3 B_2 \nonumber
\end{align}
is the Bell operator associated to $FC_{3}$, $A_x$ and $B_y$ are related to Alice's and Bob's measurement operators by $A_x = A_{0|x} - A_{1|x}$ and $B_y = B_{0|y} - B_{1|y}$, and spacelike separation of the parties implies that $A_x$ and $B_y$ commute, $\forall x,y$. We recall that, since we are making no assumption restricting the dimension of the Hilbert space, we can also and, in the following, will take the measurements to be projective without loss of generality, in which case ${A_x}^{2} = {B_y}^{2} = \id$.

Following these observations, we can prove that $5 \id - B_{F_{3}}$ is positive semidefinite by expressing it in the form of a sum of squares,
\begin{align} 
  5 \id - B_{FC_{3}} = {} & \frac{1}{2}(A_1 + A_2 - B_1- B_2)^2 \\
  &+ \frac{1}{2}(A_1 - A_2 - B_3)^2 + \frac{1}{2}(B_1 - B_2 - A_3)^2 .
  \nonumber
\end{align}
The right side is manifestly positive semidefinite, since it is a sum of squares of self-adjoint operators, and expanding and simplifying it by substituting the projectivity and commutation rules $A_{x}^{2} = B_{y}^{2} = \id$ and $B_{y} A_{x} = A_{x} B_{y}$ yields exactly the shifted Bell operator on the left. Hence, for quantum behaviors, $FC_{3}$ is  bounded by $5$. To see that the bound is tight, we note that it is attained by measuring
\begin{IEEEeqnarray}{rCcCl}
  A_{1} &=& B_{1} &=& \frac{\sqrt{3}}{2} Z + \frac{1}{2} X \,, \\
  A_{2} &=& B_{2} &=& \frac{\sqrt{3}}{2} Z - \frac{1}{2} X \,, \\
  A_{3} &=& B_{3} &=& X
\end{IEEEeqnarray}
on the maximally-entangled two-qubit state $\ket{\phi^{+}} = (\ket{00} + \ket{11})/\sqrt{2}$.

The quantum bound $FC_2 \leq 9$ is also proved by a similarly simple SOS decomposition,
\begin{align} 
9 \id - B_{FC_{2}} = {} & \frac{1}{6}\bigro{3A_1 -(2B_1+2B_2+B_3)}^2 \\
&+ \frac{1}{6}\bigro{3A_2 -(2B_1-B_2-2B_3)}^2 \nonumber \\
&+ \frac{1}{6}\bigro{3A_3 -(B_1-2B_2+2B_3)}^2 . \nonumber
\end{align}
The maximum value $FC_{2} = 9$ is attained with the local deterministic strategy $A_{1} = A_{2} = B_{1} = B_{2} = +1$ and $A_{3} = B_{3} = -1$.

\subsection{$L_{2\text{conv}}^{NS}$ facets}

In the previous section we claimed that the quantum bounds of the facet expressions $F_{2}$, $F_{3}$, $F_{4}$, and $F_{5}$ are identical to their local and $L_{2\text{conv}}^{NS}$ bounds. We prove this here by giving an SOS decomposition for each of the shifted Bell operators as well as a local deterministic strategy that attains the bound.

For $F_2$, the quantum bound $F_{2} \leq 7$ is proved by the SOS
\begin{align}
  7\id - B_{F_2} = \frac{1}{4} \Bigro{ & \babs{R_1^{++}}^2
    + \babs{R_1^{-+} - R_2^{-+}}^2 \\
  & + 2\babs{R_2^{-+}}^2 + 2\babs{R_1^{--}}^2 } , \nonumber
\end{align}
where $\abs{R}^{2} = R^{\dagger} R$ and
\begin{align}
  R^{++}_{1} &= 2 \id - A_{1} - A_{2} - A_{3} - B_{1} - B_{2} - B_{3} , \\
  R^{-+}_{1} &= A_{1} + A_{2} - B_{1} - B_{2} , \nonumber \\
  R^{-+}_{2} &= A_{3} - B_{3} , \nonumber \\
  R^{--}_{1} &= A_{1} - A_{2} - B_{1} + B_{2} . \nonumber
\end{align}
The bound is attained with the local deterministic strategy
$A_1=A_2 = B_1=B_2=1$ and $A_3=B_3=-1$.

For $F_3$, the quantum bound $F_{3} \leq 8$ is implied by
\begin{align}
8\id - B_{F_3} = \frac{1}{4} \Bigro{ & \babs{R^{++}_{1}}^{2} + \babs{R^{++}_{2}}^{2} \\
&+ 2 \babs{R^{-+}_{1} - R^{-+}_{2}}^{2} + 2 \babs{R^{--}_{1}}^{2} } , \nonumber
\end{align}
where 
\begin{align}
  R^{++}_{1} &= 2 \id - A_{1} - A_{2} , \\
  R^{++}_{2} &= A_{1} + A_{2} - 2 B_{1} , \nonumber \\
  R^{-+}_{1} &= A_{1} - A_{2} ,\nonumber \\
  R^{-+}_{2} &= B_{2} + B_{3} , \nonumber \\
  R^{--}_{1} &= 2 A_{3} - B_{2} + B_{3} . \nonumber
\end{align}	
The local bound $8$ can be attained by setting $A_{1} = A_{2} = A_{3} = B_{1} = B_{2} = +1$ and $B_{3} = -1$.

For $F_4$ we have that
\begin{align}
  11 \id - B_{F_4} = {}
  &\frac{1}{32} \babs{2 R^{+}_{2} - R^{+}_{7} - R^{+}_{8}}^{2} \\
  &+ \frac{1}{4} \babs{R^{+}_{1} - R^{+}_{6}}^{2} + \frac{1}{2} \babs{R^{+}_{3}}^{2}   + \frac{1}{2} \babs{R^{+}_{4}}^{2} \nonumber \\
  &+ \frac{1}{32} \babs{    2 R^{-}_{1} - 2 R^{-}_{2} + R^{-}_{4} - R^{-}_{5}}^{2} \nonumber \\
  &+ \frac{1}{2}\babs{R^{-}_{1} - R^{-}_{2}}^{2} , \nonumber
\end{align}	 
where (note that not all $R_i^\pm$ we define are used)
\begin{align}
  R^{+}_{1} &= \id- A_{1} , \\
  R^{+}_{2} &= 2 \id - B_{1} - B_{2} , \nonumber \\
  R^{+}_{3} &= 2 \id - A_{1} (B_{1} + B_{2}) , \nonumber \\
  R^{+}_{4} &= 2 \id + (A_{2} + A_{3}) B_{3} , \nonumber \\
  R^{+}_{5} &= A_{2} + A_{3} - 2 B_{3} , \nonumber \\
  R^{+}_{6} &= (\id - A_{1}) B_{3} , \nonumber \\
  R^{+}_{7} &= (A_{2} + A_{3}) (2 \id - B_{1} - B_{2}) , \nonumber \\
  R^{+}_{8} &= (A_{2} - A_{3}) (B_{1} - B_{2}) , \nonumber \\
  R^{-}_{1} &= A_{2} - A_{3} , \nonumber \\
  R^{-}_{2} &= B_{1} - B_{2} , \nonumber \\
  R^{-}_{3} &= A_{1} (B_{1} - B_{2}) , \nonumber \\
  R^{-}_{4} &= (A_{2} + A_{3}) (B_{1} - B_{2}) , \nonumber \\
  R^{-}_{5} &= (A_{2} - A_{3}) (B_{1} + B_{2}) , \nonumber \\
  R^{-}_{6} &= (A_{2} - A_{3}) B_{3} . \nonumber
\end{align}
The bound $11$ is attainable by setting $  A_{1} = B_{1} = B_{2} = B_{3} = +1 $ and $ A_{2} = A_{3} = -1$.

Finally, for $F_5$ we have that
\begin{IEEEeqnarray}{rCl}
  \IEEEeqnarraymulticol{3}{l}{13 \id - B_{F_5}} \\
  \quad &=& \frac{1}{32} \babs{
    2 R^{+}_{1} + 2 R^{+}_{4} - 2 R^{+}_{5} + R^{+}_{6} - R^{+}_{8}}^{2} \nonumber \\ 
  &&{}+ \frac{1}{4} \babs{R^{+}_{2} + R^{+}_{7}}^{2}
  + \frac{1}{2} \babs{R^{+}_{1} + R^{+}_{3}}^{2}
  + \frac{1}{2} \babs{R^{+}_{5}}^{2} \nonumber \\
  &&{}+ \frac{1}{32} \babs{2 R^{-}_{1} + R^{-}_{3} - 2 R^{-}_{4} + R^{-}_{6}}^{2} \nonumber \\
  &&{}+ \frac{1}{2} \babs{R^{-}_{1} - R^{-}_{2}}^{2} , \nonumber
\end{IEEEeqnarray}	 
where 
\begin{align}
  R^{+}_{1} &= 2 \id - A_{1} - A_{2} , \\ \nonumber
  R^{+}_{2} &= \id - B_{1} , \\ \nonumber
  R^{+}_{3} &= (A_{1} + A_{2}) (\id - B_{1}) , \\ \nonumber
  R^{+}_{4} &= 2 \id + A_{3} (B_{2} - B_{3}) , \\ \nonumber
  R^{+}_{5} &= 2 A_{3} + B_{2} - B_{3} , \\ \nonumber
  R^{+}_{6} &= (2 \id - A_{1} - A_{2}) (B_{2} - B_{3}) , \\ \nonumber
  R^{+}_{7} &= A_{3} (\id - B_{1}) , \\ \nonumber
  R^{+}_{8} &= (A_{1} - A_{2}) (B_{2} + B_{3}) , \\ \nonumber
  R^{-}_{1} &= A_{1} - A_{2} , \\ \nonumber
  R^{-}_{2} &= B_{2} + B_{3} , \\ \nonumber
  R^{-}_{3} &= (A_{1} + A_{2}) (B_{2} + B_{3}) , \\ \nonumber
  R^{-}_{4} &= A_{3} (B_{2} + B_{3}) , \\ \nonumber
  R^{-}_{5} &= (A_{1} - A_{2}) B_{1} , \\ \nonumber
  R^{-}_{6} &= (A_{1} - A_{2}) (B_{2} - B_{3}) \,.
\end{align}
The local bound $13$ is attainable by setting $A_{1} = A_{2} = A_{3} = B_{1} = B_{3} = +1$ and $B_{2} = -1$.

\subsection{$Q_{2\text{conv}_\text{JM}}$ bound of the elegant Bell expression}

Finally, we prove the $Q_{2\text{conv}_\text{JM}}$ bound $I_{\text{E}} \leq 2 + 2 \sqrt{5}$ on the elegant Bell expression asserted in Section~\ref{di_quantum_jm}. In this case it is sufficient to establish that the bound holds for $Q_{12_\text{JM}}$ since symmetries of the elegant Bell expression with respect to relabellings of inputs and outputs imply that the same bound must hold for $Q_{13_\text{JM}}$ and $Q_{23_\text{JM}}$ and, therefore, also their convex hull $Q_{2\text{conv}_\text{JM}}$. An SOS decomposition proving the bound is
\begin{IEEEeqnarray}{rCl}
  \IEEEeqnarraymulticol{3}{l}{
    \label{elegant_sos}
    \bigro{2 + 2\sqrt{5}} \id - B_{I_{\text{E}}}} \\
  \qquad &=& \frac{\sqrt{5} - 1}{32}
  \babs{R^{+{+}+}_{1} + R^{+{+}+}_{2} + R^{+{+}+}_{3}}^{2} \nonumber \\
  &&{}+ \frac{3 \sqrt{5} - 5}{64} \babs{R^{+{+}+}_{1} + R^{+{+}+}_{2}}^{2}
  \nonumber \\
  &&{}+ \frac{5 - \sqrt{5}}{32} \Bigro{
    \babs{R^{+{+}-}_{1}}^{2} + \babs{R^{+{-}+}_{1}}^{2}
    + \babs{R^{-{-}+}_{1}}^{2}} \IEEEeqnarraynumspace \nonumber \\
  &&{}+ \frac{5 (\sqrt{5} - 1)}{512} 
    \begin{IEEEeqnarraybox}[][t]{rl}
      \Bigro{& \babs{R^{+{-}-}_{1} + R^{+{-}-}_{2}}^{2} \\
      &+\: \babs{R^{-{+}-}_{1} + R^{-{+}-}_{2}}^{2}}
    \end{IEEEeqnarraybox} \nonumber \\
  &&{}+ \frac{5}{512} \babs{R^{-{-}+}_{1} + R^{-{-}+}_{2}}^{2} , \nonumber
\end{IEEEeqnarray}
where
\begin{IEEEeqnarray}{rCl}
  R^{+{+}+}_{1} &=& \frac{4}{\sqrt{5}} \id
  - A_{1} \bigro{B_{1} + B_{2} - B_{3} - B_{4}} , \\
  R^{+{+}+}_{2} &=& \frac{4}{\sqrt{5}} \id
  - A_{2} \bigro{B_{1} - B_{2} + B_{3} - B_{4}} , \nonumber \\
  R^{+{+}+}_{3} &=& \Bigro{2 + \frac{2}{\sqrt{5}}} \id
  - A_{3} \bigro{B_{1} - B_{2} - B_{3} + B_{4}} , \nonumber \\
  R^{+{+}-}_{1} &=& \Bigro{2 + \frac{2}{\sqrt{5}}} A_{3}
  - \bigro{B_{1} - B_{2} - B_{3} + B_{4}} , \nonumber \\
  R^{+{-}+}_{1} &=& \frac{4}{\sqrt{5}} A_{2}
  - \bigro{B_{1} - B_{2} + B_{3} - B_{4}} , \nonumber \\
  R^{+{-}-}_{1} &=& \frac{4}{\sqrt{5}}
  A_{1} \bigro{B_{1} + B_{2} + B_{3} + B_{4}} \nonumber \\
  &&{}- \Bigro{2 - \frac{2}{\sqrt{5}}}
  A_{3} \bigro{B_{1} - B_{2} + B_{3} - B_{4}} , \nonumber \\
  R^{+{-}-}_{2} &=& \frac{4}{\sqrt{5}}
  A_{2} \bigro{B_{1} - B_{2} - B_{3} + B_{4}} \nonumber \\
  &&{}+ \Bigro{2 + \frac{2}{\sqrt{5}}}
  A_{3} \bigro{B_{1} - B_{2} + B_{3} - B_{4}} , \nonumber \\
  R^{-{+}+}_{1} &=& \frac{4}{\sqrt{5}} A_{1}
  - \bigro{B_{1} + B_{2} - B_{3} - B_{4}} , \nonumber \\
  R^{-{+}-}_{1} &=& \frac{4}{\sqrt{5}}
  A_{1} \bigro{B_{1} - B_{2} - B_{3} + B_{4}} \nonumber \\
  &&{}+ \Bigro{2 + \frac{2}{\sqrt{5}}}
  A_{3} \bigro{B_{1} + B_{2} - B_{3} - B_{4}} , \nonumber \\
  R^{-{+}-}_{2} &=& \frac{4}{\sqrt{5}}
  A_{2} \bigro{B_{1} + B_{2} + B_{3} + B_{4}} \nonumber \\
  &&{}- \Bigro{2 - \frac{2}{\sqrt{5}}}
  A_{3} \bigro{B_{1} + B_{2} - B_{3} - B_{4}} , \nonumber \\
  R^{-{-}+}_{1} &=& \Bigro{2 - \frac{2}{\sqrt{5}}}
  A_{1} \bigro{B_{1} - B_{2} + B_{3} - B_{4}} \nonumber \\
  &&{}+ \frac{4}{\sqrt{5}}
  A_{3} \bigro{B_{1} + B_{2} + B_{3} + B_{4}} , \nonumber \\
  R^{-{-}+}_{2} &=& \Bigro{2 - \frac{2}{\sqrt{5}}}
  A_{2} \bigro{B_{1} + B_{2} - B_{3} - B_{4}} \nonumber \\
  &&{}+ \frac{4}{\sqrt{5}}
  A_{3} \bigro{B_{1} + B_{2} + B_{3} + B_{4}} . \nonumber
\end{IEEEeqnarray}
Under the standard projectivity and commutation rules, the right side of \eqref{elegant_sos} expands and simplifies to
\begin{align}
  &\bigro{2 + 2\sqrt{5}} \id - B_{I_{\text{E}}} \\
  &\qquad - \frac{3 \sqrt{5} - 5}{32} \comm{A_{1}}{A_{2}}
  \bcomm{B_{1} - B_{4}}{B_{2} - B_{3}} , \nonumber
\end{align}
which in turn simplifies to $\bigro{2 + 2\sqrt{5}} \id - B_{I_{\text{E}}}$ if $\comm{A_{1}}{A_{2}} = 0$. The bound is attained, for instance, by preparing the maximally-entangled state $\ket{\phi^{+}} = (\ket{00} + \ket{11})/\sqrt{2}$ and measuring
\begin{IEEEeqnarray}{c+c}
  A_{1} = A_{2} = Z , & A_{3} = X
\end{IEEEeqnarray}
on Alice's side and
\begin{IEEEeqnarray}{rCl+rCl}
  B_{1} &=& \frac{2}{\sqrt{5}} Z + \frac{1}{\sqrt{5}} X , &
  B_{2} &=& - X , \\
  B_{4} &=& - \frac{2}{\sqrt{5}} Z + \frac{1}{\sqrt{5}} X , &
  B_{3} &=& - X
\end{IEEEeqnarray}
on Bob's side.



\section{Appendix I: Other Bell inequalities considered}


Here we list all Bell inequalities used in the main text. For the scenario where both Alice and Bob have three measurements, the $I_{3322}$ inequality is the only {facet inequality} that is inequivalent to CHSH up to relabelling\footnote{Notice that, because CHSH inequalities are precisely those {characterising} the $L_{ij}^{\rm NS}$ sets, $I_{3322}$ is the only relevant inequality for the case of three dichotomic measurements for Alice and for Bob. Similarly, for more complex scenarios discussed later on, the CHSH inequality will also be irrelevant.}. The $I_{3322}$ inequality first appeared in {\cite{froissart81} and was later independently derived in \cite{sliwa03,collins04}. It} can be written as 
\begin{align} \label{I3322}
I_{3322}=& -\mean{A_1} - \mean{A_2}+\mean{B_1}+\mean{B_2} \\
&+ \mean{A_1B_1} + \mean{A_1B_2}+\mean{A_1B_3} \nonumber \\ &
+\mean{A_2B_1}+\mean{A_2B_2}-\mean{A_2B_3} \nonumber \\ &
 +\mean{A_3B_1}-\mean{A_3B_2} \leq 4 \nonumber.
\end{align}

The scenario where Alice can perform three measurements and Bob four has three classes of tight Bell inequalities that are not relabelling-equivalent to the CHSH and the $I_{3322}$ inequality. This scenario was first {characterized} in Ref.~\cite{collins04} and the three classes of inequalities are {represented} by
\begin{align} \label{I3422_1}
M_{3422}^1= &p_A(0|1)+p_A(0|2)-2p_A(0|3) \\ 
 + &p_B(0|1)+p_B(0|4)\nonumber \\ 
 -&p(00|11) -p(00|12)+p(00|13)-p(00|14) \nonumber \\ 
 -&p(00|21) +p(00|22)-p(00|23)-p(00|24) \nonumber \\ 
 +&p(00|31) +p(00|32)+p(00|33)-p(00|34)\leq 2 ,\nonumber 
\end{align}
\begin{align} \label{I3422_2}
M_{3422}^2= &p_A(0|2)-p_A(0|3) \\ 
 - &p_B(0|1)-p_B(0|3) +p_B(0|4) \nonumber \\ 
 -&p(00|11) +p(00|13)-p(00|14) \nonumber \\ 
 +&p(00|21) -p(00|22)-p(00|24) \nonumber \\ 
 +&p(00|31) +p(00|32)+p(00|33)\leq 1 ,\nonumber 
\end{align}
\begin{align} \label{I3422_3}
M_{3422}^3= &p_A(0|1)-p_A(0|3) \\ 
 - &p_B(0|3)+2p_B(0|4)\nonumber \\ 
 -&2 p(00|11) +p(00|13)-p(00|14) \nonumber \\ 
 -&p(00|21) -p(00|22)+p(00|23)-p(00|24) \nonumber \\ 
 +&p(00|31) +p(00|32)+p(00|33)-p(00|34)\leq 2. \nonumber 
\end{align}

The scenario where Alice can perform three measurements and Bob five has only one class of tight Bell inequalities that are not relabelling-equivalent to the ones in previous scenarios. This scenario was first {characterized} in Ref.~\cite{quintino14} and this new inequivalent inequality is given by
\begin{align} \label{I3522}
I_{3522}= &\phantom{+} \mean{B_1} + \mean{B_2} \\
 &+\mean{A_1B_1} - \mean{A_1B_2}+\mean{A_1B_3}+\mean{A_1B_4}
\\ &+\mean{A_2B_1} - \mean{A_2B_2} - \mean{A_2B_4} +\mean{A_2B_5} \nonumber \\ 
	&+ \mean{A_3B_1}-\mean{A_3B_2} -\mean{A_3B_3}-\mean{A_3B_5} \leq 6.
 \nonumber
\end{align}

The chained Bell inequality we used \cite{Pearle70,braunstein90} is given by
\begin{align} \label{chain3}
I_{\text{chain3}}&= \mean{A_1B_1} + \mean{A_2B_1} \\ 
&+ \mean{A_2B_2} + \mean{A_3B_2} \nonumber \\
&+ \mean{A_3B_3} - \mean{A_1B_3} \leq 6. \nonumber
\end{align}

The elegant Bell inequality we used was introduced in Ref.~\cite{Elegant} and can be defined as
\begin{align} \label{I_E}
I_{\text{E}}= &+ \mean{A_1B_1} + \mean{A_1B_2} - \mean{A_1B_3} - \mean{A_1B_4} \\ 
 &+ \mean{A_2B_1} - \mean{A_2B_2} + \mean{A_2B_3}- \mean{A_2B_4} \nonumber \\
 &+ \mean{A_3B_1} - \mean{A_3B_2} - \mean{A_3B_3}+ \mean{A_3B_4} \leq 6. \nonumber
\end{align}

The chained version of {the} CHSH inequality we used was proposed in Ref.~\cite{APVW16} and reads as
\begin{align} \label{chain_CHSH}
I_{\text{chainCHSH}}&= \mean{A_1B_1} + \mean{A_1B_2}+ \mean{A_2B_1}- \mean{A_2B_2} \\ 
&+\mean{A_1B_3} + \mean{A_1B_3}+ \mean{A_3B_4}- \mean{A_3B_4} \nonumber \\
&+\mean{A_2B_5} + \mean{A_2B_5}+ \mean{A_3B_6}- \mean{A_3B_6} \leq 6. \nonumber
\end{align}

The $M_{3322}$ Bell inequality we used was presented in Ref.~\cite{BGS15}. {It} is relabelling equivalent to our inequality $F_6$, a facet of the $L_{2\text{conv}}^{NS}$ polytope{,} and is given by
\begin{align} \label{M3322}
M_{3322}= &-2p_A(0|1)- p_A(0|2)-2p_B(0|1)
\\ &+ p(00|11) +p(00|12)+p(00|13) \nonumber \\ 
 &+p(00|21) +p(00|22)-p(00|23) \nonumber \\ 
 &+p(00|31) - p(00|32)\leq 0.\nonumber
\end{align}


\section{Appendix J: Every compatibility structure can be ruled out in the EPR-steering scenario}


In this section we consider the problem of certifying measurement incompatibility {in the semi-device-independent EPR-steering} scenario. Let $\rho\in L(\mathbb{C}^d\otimes \mathbb{C}^d)$ be a quantum state and $\{M_{a|x}\}$ be a set of quantum measurements.
An assemblage given by $\sigma_{a|x}=\tr_A \left(M_{a|x}\otimes \openone \; \rho \right)$ is unsteerable if it can be written as
\begin{equation}
 \sigma_{a|x}=\sum_\lambda {\pi_{\lambda}} p(a|x,\lambda) \rho_\lambda,
\end{equation}
where $\pi_{\lambda}$ is a probability distribution on $\lambda$, $p(a|x,\lambda)$ is a distribution on $a$, and {$\rho_{\lambda}$} are quantum states. References~\cite{quintino14,uola14} show that a set of measurements is compatible (jointly measurable) if and only if it is useful for EPR-steering. More precisely, let $\ket{\psi}$ be a full {Schmidt}-rank entangled state and $\{M_{a|x}\}$ a set of quantum measurements. The set $\{M_{a|x}\}$ is compatible if and only if the assemblage $\sigma_{a|x}=\tr_A \left(M_{a|x}\otimes \openone \; \ketbra{\psi}{\psi} \right)$ is steerable. In this section we extend this previous result to any general compatibility structure. 
	
Given {a set of compatibility structures} $\mathcal{C}=\left[C_1,C_2, \ldots, C_N\right]$, we say that the assemblage $\{\sigma_{a|x}\}$ is genuinely $\mathcal{C}$-steerable when it cannot be written 
 \begin{equation}
 \sigma_{a|x}=\sum_i p_i \tau^{C_i}_{a|x},
 \end{equation}
 where $\left\{\tau^{C_i}_{a|x}\right\}$ is an unsteerable assemblage when structure $C_i$ is considered and $\{p_i\}$ is a probability distribution.
From the above definition we see that if {an} assemblage given by $\sigma_{a|x}$	is genuinely $\mathcal{C}$-steerable, one certifies that the measurements $\{M_{a|x}\} $ held by Alice are genuinely $\mathcal{C}$-incompatible in a semi-device-independent way. We can now present the main theorem of this section.

\begin{theorem}
Let $\{M_{a|x}\}$, $M_{a|x}\in L\left(\mathbb{C}^d\right)$ be a set of measurements and $\mathcal{C}=\left[C_1,C_2, \ldots, C_N\right]$ be some compatibility structure. If $\{ M_{a|x}\}$ is not genuinely $\mathcal{C}$-incompatible, the assemblage $\sigma_{a|x}=\tr_A \left(M_{a|x}\otimes \openone \; \rho \right)$ is not genuinely $\mathcal{C}$-steerable, regardless {of} the state $\rho$. 

If $\{ M_{a|x}\}$ is genuinely $\mathcal{C}$-incompatible, the assemblage ${\sigma_{a|x}=\tr_A \left(M_{a|x}\otimes \openone \; \ketbra{\psi}{\psi} \right)}$ is genuinely $\mathcal{C}$-steerable, where $\ket{\psi}$ {is} a $d$-Schmidt-rank pure entangled state.
\end{theorem}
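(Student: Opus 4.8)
The plan is to prove the theorem in two directions, each direction reducing the genuine-$\mathcal{C}$ statement to the known bipartite result of Refs.~\cite{quintino14,uola14} applied edge-by-edge to the compatibility structure. The overall strategy exploits the linearity of the map $M_{a|x}\mapsto\sigma_{a|x}=\tr_A\!\left(M_{a|x}\otimes\openone\,\rho\right)$, which transports convex decompositions of the measurement assemblage into convex decompositions of the state assemblage. The single-edge equivalence ``a set of measurements respects a compatibility structure $C_i$ if and only if the induced assemblage is unsteerable with respect to $C_i$'' is the workhorse; once this is established for each hyper-edge, both directions of the theorem follow by applying it term-by-term to a convex combination.

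\textbf{First (easy) direction.} Suppose $\{M_{a|x}\}$ is \emph{not} genuinely $\mathcal{C}$-incompatible, so by definition it admits a decomposition $M_{a|x}=\sum_i p_i J^{C_i}_{a|x}$ where each $\{J^{C_i}_{a|x}\}$ respects the compatibility structure $C_i$. First I would apply the linear channel $\tr_A(\,\cdot\otimes\openone\,\rho)$ to both sides, obtaining $\sigma_{a|x}=\sum_i p_i\,\tau^{C_i}_{a|x}$ with $\tau^{C_i}_{a|x}:=\tr_A\!\left(J^{C_i}_{a|x}\otimes\openone\,\rho\right)$. It then remains to check that each $\{\tau^{C_i}_{a|x}\}$ is an unsteerable assemblage with respect to $C_i$: this is exactly the ``compatible $\Rightarrow$ unsteerable'' implication of Refs.~\cite{quintino14,uola14} applied within the hyper-edge, which holds for \emph{any} state $\rho$ (one simply substitutes the joint-measurability decomposition \eqref{compatible M} into the assemblage, mirroring the computation in \eqref{JM_NL}). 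Hence $\sigma_{a|x}$ is not genuinely $\mathcal{C}$-steerable, which proves the first claim for arbitrary $\rho$.

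\textbf{Second (hard) direction.} Now suppose $\{M_{a|x}\}$ is genuinely $\mathcal{C}$-incompatible, and take $\ket{\psi}$ of full Schmidt rank so that $\sigma_{a|x}=\tr_A\!\left(M_{a|x}\otimes\openone\,\ketbra{\psi}{\psi}\right)$. The goal is to rule out any decomposition $\sigma_{a|x}=\sum_i p_i\,\tau^{C_i}_{a|x}$ into $C_i$-unsteerable pieces. The key fact I would use is that for a full-Schmidt-rank $\ket{\psi}$ the induced channel $M_{a|x}\mapsto\sigma_{a|x}$ is \emph{invertible} on the space of operators: writing $\ket{\psi}=\sum_k\sqrt{\lambda_k}\ket{kk}$ with all $\lambda_k>0$, the partial trace $\tr_A\!\left(M\otimes\openone\,\ketbra{\psi}{\psi}\right)$ equals (up to the transpose convention) the operator with matrix elements $\sqrt{\lambda_k\lambda_{k'}}\,\bra{k'}M\ket{k}$, which can be undone because every $\lambda_k\neq0$. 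Moreover, by the steering-equivalence theorem each $C_i$-unsteerable piece $\tau^{C_i}_{a|x}$ is itself induced by a genuine set of measurements $J^{C_i}_{a|x}$ respecting $C_i$, via the \emph{same} state $\ket{\psi}$. Applying the inverse channel to the assumed decomposition then yields $M_{a|x}=\sum_i p_i\,J^{C_i}_{a|x}$, contradicting genuine $\mathcal{C}$-incompatibility.

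\textbf{The main obstacle} is the converse, within-edge implication used in the second direction: from ``$\{\tau^{C_i}_{a|x}\}$ is $C_i$-unsteerable'' one must recover measurement operators $\{J^{C_i}_{a|x}\}$ that genuinely respect $C_i$ and reproduce $\tau^{C_i}_{a|x}$ through the fixed full-Schmidt-rank state. Here I would invoke the steering-to-joint-measurability reconstruction of Refs.~\cite{quintino14,uola14}: given an unsteerable model $\tau^{C_i}_{a|x}=\sum_\mu \pi_\mu\,p(a|x,\mu)\,\rho_\mu$ for the pairs in $C_i$, one defines candidate operators via $J^{C_i}_{a|x}\propto(\rho_A)^{-1/2}\,\tau^{C_i}_{a|x}\,(\rho_A)^{-1/2}$ (with $\rho_A$ the full-rank reduced state of $\ket{\psi}$, invertible precisely by the Schmidt-rank hypothesis), and one must verify that these are legitimate POVM elements \emph{and} that the joint-measurability structure carried by the $C_i$-unsteerability survives the reconstruction. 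The delicate point is consistency: the \emph{same} operators $J^{C_i}_{a|x}$ must simultaneously work for all measurement labels $x$ in the edge and must inherit the hidden-variable post-processing that certifies compatibility on $C_i$. Once this reconstruction is carried out edge-by-edge and assembled, invertibility of the channel closes the argument and both directions of the theorem are complete.
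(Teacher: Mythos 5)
Your proposal is correct and follows essentially the same route as the paper: the first direction transports the convex decomposition through the linear map $M_{a|x}\mapsto\tr_A(M_{a|x}\otimes\openone\,\rho)$ and applies the compatible-implies-unsteerable argument edge-by-edge, and the second direction uses the invertibility of this map for a full-Schmidt-rank state (the paper writes $\sigma_{a|x}=DM_{a|x}^TD$ with $D$ invertible and reconstructs $E_\lambda^{C_i}=D^{-1}(\rho_\lambda^{C_i})^TD^{-1}$, which is exactly your $\rho_A^{-1/2}(\cdot)\rho_A^{-1/2}$ conjugation) to pull an assumed genuinely-$\mathcal{C}$-unsteerable decomposition back to a $\mathcal{C}$-compatible decomposition of the measurements, yielding the contradiction. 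The consistency issue you flag as the main obstacle is resolved in the paper exactly as you anticipate: one conjugates the hidden states of the local-hidden-state model rather than the assemblage elements, so the post-processing distributions $p^{C_i}(a|x,\lambda)$ carry over unchanged and compatibility on each edge is inherited automatically.
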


\begin{proof}
If $\{M_{a|x}\}$ is $\mathcal{C}$-compatible, it can be written as
\begin{equation}
 M_{a|x}=\sum_i p_i J^{C_i}_{a|x},
 \end{equation}
where $\left\{J^{C_i}_{a|x}\right\}$ are sets {of} measurements respecting the compatibility structure $C_i$ and $\{p_i\}$ is a probability distribution. Since for every fixed structure $C_i$, the sets of measurements $\{J^{C_i}_{a|x}\}$ are compatible, they can be decomposed as
\begin{equation}
J^{C_i}_{a|x}=\sum_\lambda p^{C_i}(a|x,\lambda) E_\lambda^{C_i},
\end{equation} 
where
\begin{subequations}
\begin{align}
p^{C_i}(a|x,\lambda) & \geq 0, \; \forall a,x,\lambda,i, \\ 
\sum_a p^{C_i}(a|x,\lambda) & =1, \; \forall x, \lambda, i,
\end{align}
\end{subequations}
and $E_\lambda^{C_i}$ are valid POVMs on $\lambda$. 
We can then construct a local hidden state model for the assemblage $\sigma_{a|x}=\tr_A \left(M_{a|x}\otimes \openone \; \rho \right)$ respecting the structure $\mathcal{C}$ by setting $\rho_\lambda^{C_i}: = \frac{ \tr_A \left( E_\lambda^{C_i}\otimes I \rho \right) }{ \tr \left( E_\lambda^{C_i}\otimes I \rho \right) }$, $	p_\lambda^{C_i} = \tr \left( E_\lambda^{C_i}\otimes I \rho \right)$, and using the same distributions $p_i $ and $p^{C_i}(a|x,\lambda)$.
	
We now show that if $\{ M_{a|x}\}$ is genuinely $\mathcal{C}$-incompatible, then the assemblage ${\sigma_{a|x}=\tr_A \left(M_{a|x}\otimes \openone \; \ketbra{\psi}{\psi} \right)}$ is genuinely $\mathcal{C}$-steerable.
Let $\ket{\phi^+_d}=\sum_{i=0}^{d-1} \ket{ii}$ be the {unnormalized} $d$-dimensional maximally-entangled state. Any $d$-Schmidt-rank state $\ket{\psi}\in \mathbb{C}^d\otimes \mathbb{C}^d$ can be written as ${\ket{\psi} = \left( D\otimes \openone\right) \ket{\phi_d^+}}$, where $D$ is {an} invertible positive {semidefinite operator} that is diagonal in the $\{\ket{i}\}_{i=0}^{d-1}$ basis {and }respects $\tr D^2=1$. A straightforward calculation shows that the assemblage generated by the measurements $\{M_{a|x}\}$ on $\ket{\psi}$ is given by
\begin{equation} \label{eq:relation}
\sigma_{a|x}=	\tr_A \left(M_{a|x}\otimes \openone \; \ketbra{\psi}{\psi} \right)= D M_{a|x}^T D ,
\end{equation}
where $T$ stands for the transposition in the basis $\{\ket{i}\}_{i=0}^{d-1}$. 

Assume, by contradiction, that the assemblage ${\sigma_{a|x}=\tr_A \left(M_{a|x}\otimes \openone \; \ketbra{\psi}{\psi} \right)}$ is not genuinely $\mathcal{C}$-steerable. We can then decompose the assemblage as 
\begin{equation} \label{eq:relation2}
 \sigma_{a|x}=\sum_i p_i \tau^{C_i}_{a|x},
 \end{equation}
 where
\begin{equation}
\tau^{C_i}_{a|x}=\sum_\lambda \pi_\lambda^{C_i} p^{C_i}(a|x,\lambda) \rho_\lambda^{C_i},
\end{equation} 
 with $\pi_\lambda^{C_i}$ being a probability distribution on $\lambda$ and $\rho_\lambda^{C_i}$ valid quantum states.
 We can then construct a $\mathcal{C}$-compatible model for the measurements $\{M_{a|x}\}$ by setting ${E^{C_i}_\lambda= D^{-1} \left(\rho_\lambda^{C_i}\right)^T D^{-1}}$.
It follows from Eqs.~\eqref{eq:relation} and \eqref{eq:relation2} that $\sum_\lambda \rho_\lambda^{C_i}=(D D)^T$ for all $i$, and since $D$ is diagonal, $(D D)^T=D D$.
We now observe that {the} operators $E_\lambda^{C_i}$ form a valid quantum measurement in $\lambda$ since
\begin{align}
\sum_\lambda E_\lambda^{C_i}= \sum_\lambda D^{-1} \left(\rho_\lambda^{C_i}\right)^T D^{-1} = \openone .
\end{align} 
We now see that our $\mathcal{C}$-compatible measurement model respects
\begin{align}
M_{a|x}=\sum_i p_i \left( \sum_\lambda p^{C_i}(a|x,\lambda) E_\lambda^{C_i} \right),
\end{align}
thus contradicting the hypothesis that $\{M_{a|x}\}$ is $\mathcal{C}$-compatible.
		
\end{proof}


\end{document}